\newtheorem{theorem}{Theorem}
\newtheorem{lemma}[theorem]{Lemma}
\newenvironment{customtheorem}[1]
{\innercustomthm}
{\endinnercustomthm}
\newenvironment{customlemma}[1]
{\innercustomlma}
{\endinnercustomlma}
\theoremstyle{definition}
\newtheorem{definition}[theorem]{Definition}
\newtheorem{remark}[theorem]{Remark}
\DeclareMathOperator{\N}{N}
\DeclareMathOperator{\NN}{NN}
\DeclareMathOperator{\dm}{d}
\DeclareMathOperator*{\logistic}{logistic}
\DeclareMathOperator{\mean}{mean}
\title{\textbf{Generalized full matching and extrapolation of the results from a large-scale voter mobilization experiment}\thanks{We thank Peter Aronow, Justin Grimmer, Ben Hansen, Kosuke Imai, Walter Mebane, Sam Pimentel, Liz Stuart, Cyrus Samii, Yotam Shem-Tov and Jos\'{e} Zubizarreta for helpful suggestions and discussions. We thank Alan Gerber, Don Green and Christopher Larimer for sharing the data from their experiment. The work was partially supported by Office of Naval Research Grants N00014-15-1-2367 and N00014-17-1-2176.}}
\author{%
	Fredrik Sävje\thanks{Department of Political Science and Department of Statistics \& Data Science, Yale University.}
	\and
	Michael J. Higgins\thanks{Department of Statistics, Kansas State University.}
	\and
	Jasjeet S. Sekhon\thanks{Travers Department of Political Science and Department of Statistics, UC Berkeley.}
}
\date{\today}
\begin{document}

\def\spacingset#1{\renewcommand{\baselinestretch}%
{#1}\small\normalsize} \spacingset{1}

	\maketitle

	\bigskip
	\begin{abstract}
	\noindentMatching is an important tool in causal inference.
The method provides a conceptually straightforward way to make groups of units comparable on observed characteristics.
The use of the method is, however, limited to situations where the study design is fairly simple and the sample is moderately sized.
We illustrate the issue by revisiting a large-scale voter mobilization experiment that took place in Michigan for the 2006 election.
We ask what the causal effects would have been if the treatments in the experiment were scaled up to the full population.
Matching could help us answer this question, but no existing matching method can accommodate the six treatment arms and the 6,762,701 observations involved in the study.
To offer a solution this and similar empirical problems, we introduce a generalization of the full matching method and an associated algorithm.
The method can be used with any number of treatment conditions, and it is shown to produce near-optimal matchings.
The worst case maximum within-group dissimilarity is no worse than four times the optimal solution, and simulation results indicate that its performance is considerably closer to the optimal solution on average.
Despite its performance, the algorithm is fast and uses little memory.
It terminates, on average, in linearithmic time using linear space.
This enables investigators to construct well-performing matchings within minutes even in complex studies with samples of several million units.

	\end{abstract}

	\vfill

	\spacingset{1.45}

	\section{Introduction}

\subsection{A voter mobilization experiment}

Political scientists and economists have long been puzzled by the fact that voters vote.
The probability that any particular voter is pivotal in an election is negligible, so the benefit of voting appears to be small, but the cost is not.
In a landmark study, Gerber,~Green~\&~Larimer~(\citeyear{Gerber2008Social}) investigate whether social norms can provide an explanation.
The authors randomly assigned 344,084 registered voters to one of five treatment conditions in the 2006 primary election in Michigan.
The condition of main interest was the receipt of a postcard listing the voting history of the recipients and their neighbors.
The recipients were also informed that updated postcards would be sent out after the election.
The purpose was to use social pressure to motivate recipients to vote; if abstaining, their neighbors would know that they did not fulfill their civic duty, and they may suffer social costs or stigma.
The turnout among recipients of the postcard was $37.8\%$.
This is to be compared with a turnout of $29.7\%$ in the control group, who did not receive a postcard.
The estimated causal effect is therefore $8.1$ percentage points, indicating that social pressure was a motivation for these voters.

The Michigan experiment is impressive both in scale and design, but it has one important shortcoming.
The sample used in the experiment is not representative of the overall population.
The turnout among all registered voters in Michigan was $17.7\%$ in the 2006 primary election.
We would expect this figure to be close to the $29.7\%$ turnout in the control group if the sample was representative of the population.
The objective of the experiment was to establish whether social pressure can be a determinant of voting, so the authors constructed a sample with individuals deemed to be receptive to the postcard intervention.
The practice is methodologically sound because it maximize power with respect to the question at hand, but a consequence is that an investigation of voting behavior more generally becomes less tractable.
A careful approach, adjusting for the systematic differences between the sample and population, is needed to extrapolate the result from the experiment.
This is the task we undertake in this paper.

The exercise of generalizing the results from an experiment has attracted much recent interest \citep[see, e.g.,][]{Stuart2011,Hartman2015,Kern2016,Andrews2017,Buchanan2018,Dehejia2019}.
The typical approach is based on the assumption that all factors used to construct the experimental sample are observed.
If this indeed is the case, methods traditionally used to account for confounded treatment assignment in observational studies can be used for the extrapolation.
Units not included in the experimental study can be seen as assigned to an alternative treatment condition, and approaches of covariate adjustments between treatment groups can be applied.
The results from the extrapolation are less reliable than the results from the experiment itself because we rarely know what factors determined the sample.
The Michigan experiment is, however, an exception in this regard.
The construction of the experimental sample was based on the voter file, containing a record for every registered voter in Michigan, and we have access to this data set.
In other words, the selection-on-observables assumption, as it often is called, is known to be satisfied.

Conceptually, the task ahead is simple.
We need to make the experimental sample comparable to the overall population with respect to observed characteristics in the voter file.
Practically, the task is far from trivial.
The first challenge is the type of information contained in the voter file.
The file only contains information necessary to organize fair elections, so it is fairly sparse, but it does contain the addresses of the registered voters.
The authors of the original study worked with a political consultant to construct the sample using proprietary indices of partisanship and voting behavior.
We know that these indices were constructed based on the information in the voter file, including geographical coordinates derived from the addresses, but we do not know how the information was used.
While we in principle can adjust for the systematic selection because all information used in the sample selection process is observed, the relevant functional form of the coordinates in the indices is unknown and likely complex.
In particular, we cannot rule out that auxiliary geographical information have been merged to the voter file and subsequently used in the construction of the sample.

The need to accommodate geographical information limits how the covariate adjustment can be done.
A common approach is to assign weights to the observations so to make selected covariate moments similar between the treatment groups after reweighing \citep[see, e.g.,][]{Graham2012,Hainmueller2012,Diamond2013,Imai2014,Zubizarreta2015,Athey2018}.
The approach generally works well, but it might not be a good choice in the current application because there is no obvious way to use the geographical information.
It is infeasible to track down all information that could have been merged to the voter file, so the coordinates must be used directly.
However, even a large number of moments and cross-moments are unlikely to capture all aspects of the coordinates relevant for the sample selection process.

We will use a matching method to address the issue \citep{Cochran1973}.
The approach constructs groups of units that are as homogeneous as possible with respect to the observed confounders while still having all treatment conditions represented in each matched group.
The homogeneity of units is judged using an arbitrary distance metric, which easily can accommodate the geographical coordinates in a flexible, non-parametric way.
The adjustment is achieved by assigning weights the units proportionally to the relative prevalence of the treatment conditions within the matched groups.
The reweighed treatment groups will be comparable if the matched groups are sufficiently homogeneous.
Unlike the moment-based approaches referenced above, the treatment groups will be similar with respective to the whole joint covariate distribution.
This comes at the cost of less balance on specific moments compared to when the moments are specifically targeted.

The choice of matching-based adjustments for the analysis leads us to the second challenge.
The experiment was large and complex with 344,084 participants and five treatment conditions.
When adding the overall population from the complete voter file, the data set consists of 6,762,701 observations divided between six effective treatment conditions.
The typical application for matching methods is a study with two treatment conditions and a few thousand observations.
This motivates the development of a new matching method.

\subsection{Methodological contribution}

The conceptual simplicity of matching is deceiving.
The approach involves an intricate integer programming problem.
It is generally intractable to derive optimal solutions to these problems unless the sample is small and the design is simple \citep{Zubizarreta2012}.
This paper details a matching method with near-optimal performance accommodating complex studies with large samples.

The introduced method is a generalization of \emph{full matching}.
Full matching is a flexible method that is optimal for a large set of common cases \citep{Rosenbaum1991,Hansen2004}.
In particular, among all matching methods that do not leave units unassigned, full matching has the least within-group heterogeneity.
The method is, however, restricted to studies with two treatment conditions where the investigator requires no more than one unit of each treatment condition in the matched groups.
Subsequently, full matching cannot be used with more complex designs, and investigators are forced to resort to suboptimal solutions in these situations.
Two such designs are studies with more than two treatment conditions and when the matched groups are required to contain more than two units of each treatment condition.
The method we introduce, \emph{generalized full matching}, allows for designs with multiple treatments and complex compositional restrictions.

Existing matching algorithms cannot be used to derive generalized full matching.
The most widely used algorithm for full matching \citep{Hansen2006} derives optimal solutions, and it is thus an excellent choice when it can be used.
Its focus is, however, traditional designs with two treatment conditions.
The derivation of optimal solutions is also computationally demanding, and the algorithm cannot be used with large samples.
The core contribution of this paper is the development of an algorithm to derive generalized full matchings.

The algorithm derives near-optimal generalized full matchings in a wide range of settings, and it does so quickly even in large samples.
A matching produced by the algorithm is guaranteed to be within a constant factor of the optimal solution, ensuring that the matching is reasonable.
Simulations show that the derived matchings on average perform close to on par with the optimal full matching algorithm in cases where the optimal algorithm can be used.
The algorithm scales well and terminates in linearithmic time on average.
The simulation study shows that the algorithm is several orders of magnitude faster than existing solutions.

The subsequent sections describe the generalized full matching method and the associated algorithm.
The paper concludes by returning to the Michigan voter mobilization experiment to investigate what the effect would have been if the treatments were scaled up to the complete population.

\section{Generalized full matching} \label{sec:genfulmatch}

\subsection{Background}

Matching methods make treatment groups comparable by down-weighting, implicitly or explicitly, units that have a treatment assignment that is overrepresented given their characteristics.
That is, units assigned to a treatment condition that is uncommon locally in the covariate space are given a larger weight than units with a common condition.
As we saw in the Michigan experiment, people with a higher baseline propensity to vote were overrepresented in the experiment, so they must be given a smaller weight to make the experimental sample representative of the population.

We might be able to perfectly equalize the covariate distribution between the treatment groups when the confounders are few and coarse.
That is, we can construct an \emph{exact matching}.
This is achieved by stratifying the sample based on the confounders so that all units within a matched group are identical.
Exact matchings are rarely feasible because balance is often sought on continuous and other high-dimensional variables.
The units are in these cases partitioned into groups that are as homogeneous as possible, but not necessarily identical, producing an approximate matching.

The construction of the matched groups involves several considerations.
The first and immediate consideration is the objective of the matching, namely to make the matched groups as homogeneous as possible.
Homogeneity is typically assessed through pairwise distances between units based on some distance metric deemed relevant for the application at hand.
Common choices are the absolute difference between propensity scores \citep{Rosenbaum1983} and Euclidean and Mahalanobis distances in the covariate space \citep{Cochran1973}.

An equally important consideration is the composition of the matched groups.
The archetypical restriction on the composition is \emph{nearest neighbor matching} (also called 1:1-matching).
Each matched group is here required to contain exactly one treated unit and exactly one control unit.
The matching can be done with replacement, where the same units can be matched to several other units, or without replacement.
In both cases, units without matches are discarded.
Nearest neighbor matching often yields homogeneous groups, but the approach comes with the obvious disadvantage that large parts of the sample may be ignored.

\citet{Rosenbaum1991} introduced \emph{full matching} to address the issue.
The method imposes two compositional constraints.
First, all units must be assigned to a matched group; no units can be discarded.
Second, all groups must contain at least one unit of each treatment condition.
\citeauthor{Rosenbaum1991} studies this type of matching in settings with two treatment conditions, and he shows that all matched groups in the optimal matching under the two constraints will contain exactly one unit of at least one treatment conditions.
The insight allows him to construct an algorithm to construct optimal matchings for a wide range of distance metrics.
The method allows investigators to construct matched groups of high quality without discarding units.
\citet{Hansen2004} provided important developments, which we discuss in more detail in the concluding remarks.

The conventional formulation of full matching requires a particular design.
It can only be used in studies with two treatment conditions when the investigator accepts matched groups with only two units.
While most observational studies conform to this design, many do not.
The conventional formulation is unsatisfactory in more complex settings.
Examples include when there are several treatment conditions or when larger matched groups are needed for heterogeneous treatment effect analysis or variance estimation.
Currently, such studies must use cruder matching methods which might introduce bias or increase variance.
The following section introduces a generalization of conventional full matching that can be used in these more complicated settings.

\subsection{A generalization of full matching}

Consider a sample consisting of $n$ units indexed by $\mathbf{U} = \{1, 2, \cdots, n\}$.
The units have been assign to one of $k$ treatment conditions indexed by $\{1, 2, \cdots, k\}$ through an unknown or partially unknown process.
Let $W_i$ denote the condition that unit $i$ is assigned to. We construct a set for each condition, $\mathbf{w}_1, \mathbf{w}_2, \cdots, \mathbf{w}_k$, collecting the units assigned to the corresponding treatment: $\mathbf{w}_x = \{i\in\mathbf{U}: W_i = x\}$.

A matched group, $\mathbf{m}$, is a non-empty set of unit indices.
A matching, $\mathbf{M}$, is a set of disjoint matched groups: $\mathbf{M} = \{\mathbf{m}_1, \mathbf{m}_2, \cdots\}$.
A matching problem is defined by a set of constraints and an objective function.
The constraints describe a collection of admissible matchings, $\mathcal{M}$, and the objective function maps from the admissible matchings to a real-valued measure of match quality: $L: \mathcal{M} \rightarrow \mathbb{R}$.

\begin{definition} \label{def:ogfm}
	An \emph{optimal matching} $\mathbf{M}^*$ is an admissible matching that minimizes the matching objective:
	\begin{equation*}
		L(\mathbf{M}^*)= \min_{\mathbf{M}\in\mathcal{M}} L(\mathbf{M}).
	\end{equation*}
\end{definition}

Generalized full matching imposes the constraint that each unit is assigned to exactly one group.
The investigator can also impose constraints on the composition of the matched groups.
In particular, for each treatment condition $x$, one can require that each matched group contains at least $c_x$ units assigned to the corresponding condition.
One can also require that each group contains at least $t$ units in total (irrespectively of treatment assignment).

\begin{definition} \label{def:agfm}
	An \emph{admissible generalized full matching} for constraints $\mathcal{C}=(c_1, \cdots, c_k, t)$ is a matching $\mathbf{M}$ that satisfies:
	\begin{enumerate}
		\item (Spanning) $\bigcup_{\mathbf{m}\in\mathbf{M}}\mathbf{m} = \mathbf{U}$,
		\item (Disjoint) $\forall \mathbf{m}, \mathbf{m}'\in\mathbf{M}, \mathbf{m} \neq \mathbf{m}' \Rightarrow \mathbf{m} \cap \mathbf{m}' = \emptyset$,
		\item (Treatment constraints) $\forall \mathbf{m}\in\mathbf{M}, \forall x \in \{1, \cdots, k\}, |\mathbf{m} \cap \mathbf{w}_x| \geq c_x$,
		\item (Overall constraint) $\forall \mathbf{m}\in\mathbf{M}, |\mathbf{m}| \geq t$.
	\end{enumerate}
	Let $\mathcal{M}_\mathcal{C}$ collect all admissible generalized full matchings for constraints $\mathcal{C}$.
\end{definition}

As an example, consider a study with three treatment conditions.
The constraint $\mathcal{C}=(2, 2, 4, 10)$ would restrict the admissible matchings to those where each matched group contains two units of the first and second treatment conditions, four units of the third condition and ten units in total.

When we restrict the matching constraints so to require only one unit of each treatment condition in the matched groups, we recover the conventional full matching definition for studies with an arbitrary number of treatment conditions.

\begin{definition} \label{def:otfm}
A \emph{conventional full matching} in a study with $k$ treatment conditions is a generalized full matching with the matching constraints $\mathcal{C}=(1, 1, \cdots, 1, k)$.
\end{definition}

Our definition of full matching differs slightly from the original definition in \citet{Rosenbaum1991}.
For studies with two treatment conditions, the conventional definition requires, in addition the conditions in Definition \ref{def:otfm}, that each matched group contains exactly one treated unit or exactly one control unit:
\begin{equation*}
\forall \mathbf{m}\in\mathbf{M}, |\mathbf{m} \cap \mathbf{w}_1| = 1 \vee |\mathbf{m} \cap \mathbf{w}_2| = 1.
\end{equation*}
However, as detailed in Proposition 1 in \citet{Rosenbaum1991}, the optimal generalized full matching with constraints $\mathcal{C}=(1, 1, 2)$ is by necessity a full matching according to the original definition.
As a result, we can disregard the additional conditions imposed by \citet{Rosenbaum1991} and equivalently define full matchings as the optimal solution to the broader class of matching problems given by Definition \ref{def:agfm}.

\subsection{Near-optimal matchings}

As the number of units in a matching problem grows large, it may become intractable to derive optimal solutions.
In fact, generalized full matching is an \textbf{NP}-hard problem \citep{Higgins2016}.
However, as we will see in the next section, the matching problem becomes tractable if we allow for approximately optimal generalized full matchings.
That is, matchings that are guaranteed to be within some factor of the optimal solution.

\begin{definition} \label{def:aogfm}
	An \emph{$\alpha$-approximate matching} $\mathbf{M}^\dagger$ is an admissible matching that is within a factor of $\alpha$ of an optimal matching: $L(\mathbf{M}^\dagger) \leq \alpha L(\mathbf{M}^*)$.
\end{definition}

\section{An algorithm for generalized full matchings}

We now turn to the description of the algorithm used to construct near-optimal generalized full matchings.
The algorithm is an extension of the blocking algorithm introduced by \citet{Higgins2016}.
Blocking is an experimental design where similar units are grouped together into blocks and treatment is assigned within the blocks.
Matching and blocking are similar in that they try to balance covariate distributions.
However, because treatment has not yet been assigned in blocking problems, such algorithms only need to consider overall size constraints.
To solve matching problems, we must be able to impose more detailed compositional constraints.

\subsection{Matching objective} \label{sec:objective}

The matching objective is based on summaries of pairwise distances between units.
Let $\dm: \mathbf{U}\times \mathbf{U} \rightarrow \mathbb{R}^+$ be a distance metric capturing similarity between any pair of units, where lower values indicate greater similarity.
A distance metric is any function that satisfies:
\begin{enumerate}
	\item (Non-negativity) $\forall i,j\in\mathbf{U}, \dm(i,j) \geq 0$,
	\item (Self-similarity) $\forall i\in\mathbf{U}, \dm(i,i) =  0$,
	\item (Symmetry) $\forall i,j\in\mathbf{U}, \dm(i,j) =  \dm(j,i)$,
	\item (Triangle inequality) $\forall i,j,\ell\in\mathbf{U}, \dm(i,j) \leq \dm(i,\ell) + \dm(\ell, j)$.
\end{enumerate}
All commonly-used similarity measures, such as absolute differences between propensity scores and Euclidean or Mahalanobis distances in a covariate space, satisfy these conditions.

The objective function used in conventional full matching is either a weighted mean of within-group distances between treated and control units \citep{Rosenbaum1991} or the sum of such distances \citep{Hansen2004}.
We will depart from this convention in two ways.
First, we will focus on the maximum within-group distance, a \emph{bottleneck} objective function.
The main motivation for this shift is that the bottleneck objective facilitates the computationally efficient algorithm we present below.
However, while we only prove approximate optimality with respect to the maximum distance, the simulation study indicates that the algorithm performs well also with respect to the mean distances.
Apart from computational considerations, minimizing the maximum distance has the advantage of avoiding devastatingly poor matches that might be undetected by, for example, the mean distance \citep{Rosenbaum2017}.

The second departure is that we consider all within-group distances, not only those between units assigned to different treatment conditions as in the existing literature.
In the conventional full matching setting, there is little difference between the two objectives.
However, with more than two treatment conditions and larger matched groups, the conventional objective risks ignoring important within-group distances.
To maintain consistency with the current literature, we will also investigate the bottleneck objective that only includes within-group distances between units assigned to different conditions.
In symbols, the objectives we consider are:
\begin{eqnarray}
L^{Max}(\mathbf{M}) &=& \max_{\mathbf{m}\in\mathbf{M}}\max\{\dm(i,j) : i,j\in\mathbf{m}\},
\\
L^{Max}_{tc}(\mathbf{M}) &=& \max_{\mathbf{m}\in\mathbf{M}}\max\{\dm(i,j) : i,j\in\mathbf{m}\wedge W_i\neq W_j\}.
\end{eqnarray}

\subsection{Preliminaries}

The description of the algorithm and the proofs of its properties rely heavily on graph theory.
The most central concepts are defined in this section.
Refer to the supplementary materials for a brief overview of additional concepts and terminology.

\begin{definition}
	A \emph{closed neighborhood} of vertex $i$ in digraph $G=(V,E)$ is a subset of vertices $\N[i] \subset V$ consisting of $i$ itself and all vertices $j\in V$ with an arc from $i$ to $j$:
	\begin{equation}
	\N[i] = \{j\in V: (i,j)\in E\} \cup \{i\}.
	\end{equation}
\end{definition}

\begin{definition} \label{IJdigraph}
	An $\mathbf{IJ}$-digraph, denoted $G(\mathbf{I} \rightarrow \mathbf{J})$, is a graph $G = (\mathbf{I} \cup \mathbf{J}, E_{\mathbf{IJ}})$ with arcs drawn from all vertices in $\mathbf{I}$ to all vertices in $\mathbf{J}$. That is:
	\begin{equation*}
	E_{\mathbf{IJ}} = \{(i,j) : i \in \mathbf{I} \wedge j \in \mathbf{J} \},
	\end{equation*}
	Self-loops (i.e., arcs from $i$ to $i$) are drawn for all vertices $i\in\mathbf{I}\cap \mathbf{J}$.
\end{definition}

\begin{definition} \label{kappaNN}
	A \emph{$\kappa$-nearest neighbor digraph} of $G=(V,E)$ is a spanning subgraph of $G$ where an arc $(i,j)\in E$ is in the nearest neighbor digraph if $j$ is one of the $\kappa$ closest vertices to $i$ according to $\dm(i,j)$ among all its outward-pointing arcs.
	That is, for each $i\in V$, sort $(i,j)\in E$ by $\dm(i,j)$ and keep the $\kappa$ smallest arcs.
	If ties exist, give priority to self-loops and otherwise resolve them arbitrarily.
	We denote $\kappa$-nearest neighbor graphs as $\NN(\kappa, G)$.
\end{definition}

\subsection{The algorithm}

The following steps describe how a matching is constructed given a sample $\mathbf{U}$, matching constraints $\mathcal{C}=(c_1, \cdots, c_k, t)$ and distance metric $\dm(i,j)$.
Figure \ref{fig:algorithm} provides an illustration.

\begin{enumerate}
	\item For each treatment condition $x\in\{1, 2, \cdots, k\}$, construct the $c_x$-nearest neighbor digraph of the $\mathbf{U}\mathbf{w}_x$-digraph.
	Construct the union of these graphs:
	\begin{equation*}
	G_w = \NN(c_1, G(\mathbf{U}\rightarrow \mathbf{w}_1)) \cup \cdots \cup \NN(c_k, G(\mathbf{U} \rightarrow \mathbf{w}_k)).
	\end{equation*}

	\item Let $r = t - c_1 - \cdots - c_k$ be the number of units needed to satisfy the overall size constraint in excess of the treatment-specific constraints.
	Construct digraph $G_r$ by drawing an arc from $i$ to each of its $r$ nearest neighbors (of any treatment status) given that this arc does not exist in $G_w$:
	\begin{equation*}
	G_r = \NN(r, G(\mathbf{U} \rightarrow \mathbf{U}) - G_w),
	\end{equation*}
	where $G(\mathbf{U}\rightarrow \mathbf{U})$ is the complete digraph over $\mathbf{U}$ and the graph difference $G(\mathbf{U}\rightarrow \mathbf{U}) - G_w$ removes all arcs in $G(\mathbf{U}\rightarrow \mathbf{U})$ that exist in $G_w$.

	We refer to the union $G_\mathcal{C} = G_w \cup G_r$ as the \emph{$\mathcal{C}$-compatible nearest neighbor digraph}.

	\item Find a set of vertices $\mathbf{S}\subset \mathbf{U}$, referred to as \emph{seeds}, so that their closed neighborhoods in $G_\mathcal{C}$ are non-overlapping and maximal (i.e., adding any additional vertex to $\mathbf{S}$ would create some overlap).
	That is, $\mathbf{S}$ has the following two properties with respect to $G_\mathcal{C}$:
	\begin{itemize}
		\item (Independence) $\forall i,j\in \mathbf{S}, \N[i]\cap \N[j] = \emptyset$.
		\item (Maximality) $\forall j \not\in \mathbf{S}, \exists i\in \mathbf{S}, \N[i]\cap \N[j] \neq \emptyset$.
	\end{itemize}

	\item Assign a label to each seed. Assign the same label to all vertices in the seed's neighborhood in $G_\mathcal{C}$.
	We refer to vertices that are labeled in this step as \emph{labeled vertices}.

	\item For each vertex $i$ without a label, find its closed neighborhood $N[i]$ in $G_\mathcal{C}$ and assign it the same label as one of the labeled vertices in the neighborhood.
\end{enumerate}
When the algorithm terminates, each vertex has been assigned a label.
Vertices that share the same label form matched groups.
The collection of labels thus forms a matching.
Let $\mathbf{M}_{alg}$ denote this matching.

\begin{sidewaysfigure}
	\centering

	\includegraphics[width=0.95\textwidth]{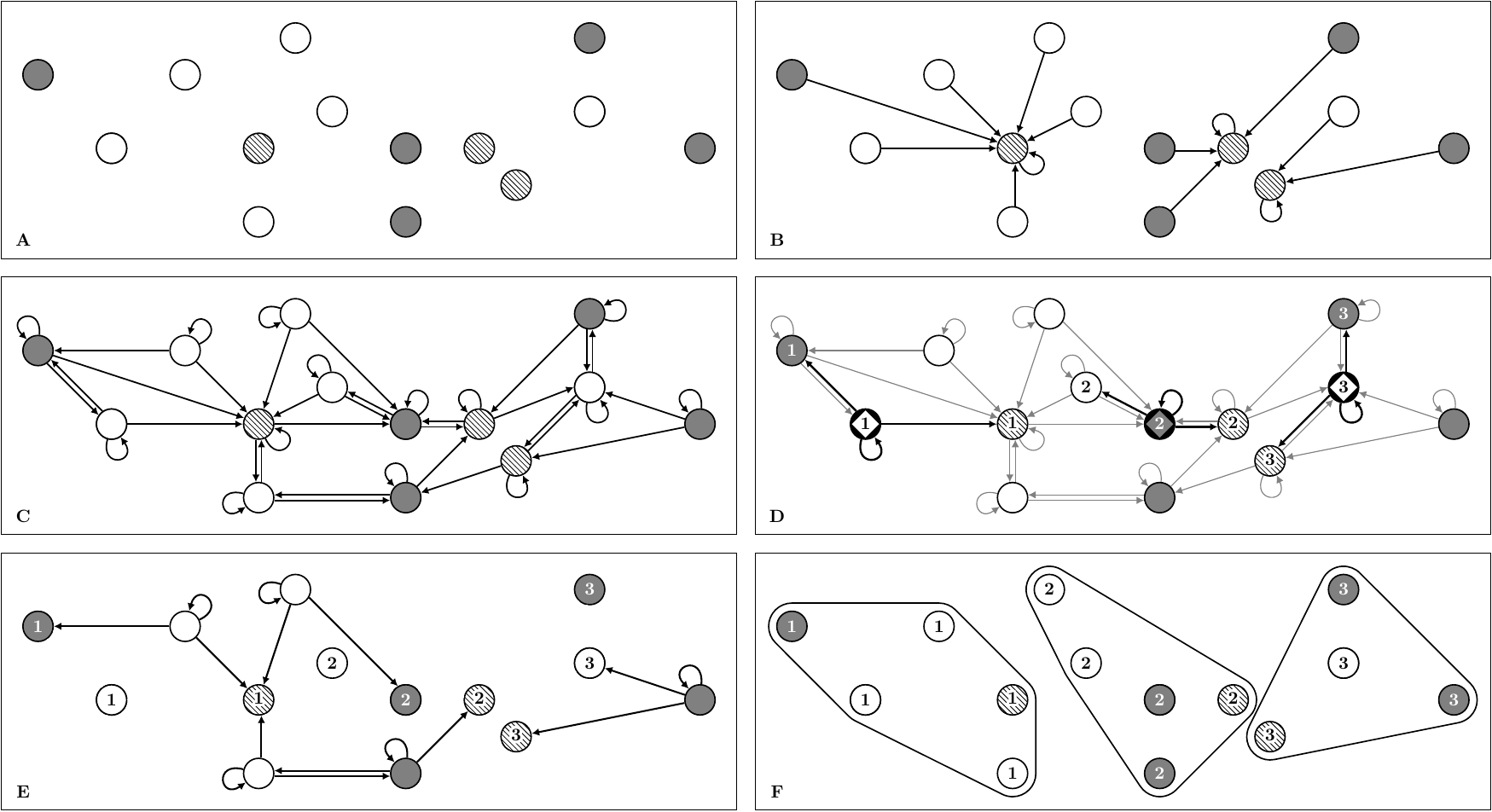}

	\vspace{0.1in}
	\caption{\scriptsize%
	The generalized full matching algorithm. The sample in this example consists of 14 units divided between three treatment conditions.
	We require that each matched group contains at least one unit of each treatment condition and at least three units in total.
	The distance metric is the Euclidean distance based on two covariates.
	\textbf{(A)} The units are presented as circles on the covariate plane.
	The treatment conditions are indicated by the units' color and pattern.
	\textbf{(B)} Step 1: One of the building blocks of $G_w$ is shown, namely the nearest neighbor digraph between the whole sample and the patterned units: $\NN(1, G(\mathbf{U}\rightarrow \mathbf{w}_\text{patterned}))$.
	\textbf{(C)} Step 2: The $\mathcal{C}$-compatible nearest neighbor digraph, $G_\mathcal{C}$, is created.
	Note that all vertices in this graph are pointing to one vertex of each treatment condition and that there exists no other graph with shorter arcs that satisfy this condition.
	\textbf{(D)} Step 3: A set of seeds is found.
	Seeds are indicated with a diamond shape enclosed in their circles.
	The arcs pointing out from the seeds are highlighted.
	Note that no two seeds are pointing to a common unit.
	Step 4: Each seed and its neighbors are given a unique label as indicated by the numbers.
	\textbf{(E)} Step 5: Some units are still unlabeled.
	Each such unit is assigned a label that is represented in its neighborhood.
	All outward-pointing arcs from unlabeled units are shown in this panel.
	\textbf{(F)} The algorithm has terminated. Matched groups are formed by units sharing the same label.
	} \label{fig:algorithm}
\end{sidewaysfigure}

\subsection{Properties}

The algorithm and the matching it constructs have two key properties.
First, $\mathbf{M}_{alg}$ is a 4-approximate generalized full matching.
That is, it is an admissible generalized full matching, and the maximum within-group distance in the matching is guaranteed to be less or equal to four times the maximum within-group distance in an optimal matching.
Second, the algorithm terminates quickly.
In this section, we discuss these two properties in detail.
Formal proofs are presented in the supplementary materials.

\subsubsection{Optimality}

Approximate optimality follows from two properties of the $\mathcal{C}$-compatible nearest neighbor digraph, as shown by the following two lemmas.

\begin{lemma} \label{prop:closedN-admissible}
	The closed neighborhood of each vertex in the $\mathcal{C}$-compatible nearest neighbor digraph satisfies the matching constraints $\mathcal{C}=(c_1, \cdots, c_k, t)$:
	\begin{equation}
	\forall i\in V,\, \forall x \in \{1, \cdots, k\},\, |\N[i] \cap \mathbf{w}_x| \geq c_x, \qquad \text{and} \qquad \forall i\in V,\, |\N[i]| \geq t.
	\end{equation}
\end{lemma}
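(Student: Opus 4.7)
The plan is to verify the two conclusions separately, exploiting the fact that $G_\mathcal{C} = G_w \cup G_r$ is a union and so the closed neighborhood in $G_\mathcal{C}$ contains the closed neighborhood in each component. The whole argument reduces to careful bookkeeping of which arcs out of a fixed vertex $i$ come from which piece of the construction.

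First, for the treatment constraints, I would fix $i \in V$ and $j\in\{1,\dots,k\}$ and look only at $\NN(c_j, G(\mathbf{U}\rightarrow \mathbf{w}_j))$, which is a subgraph of $G_\mathcal{C}$. By Definition \ref{kappaNN}, vertex $i$ has exactly $c_j$ outgoing arcs in this subgraph, and by Definition \ref{IJdigraph} each such arc terminates in $\mathbf{w}_j$. The only subtlety is whether those $c_j$ endpoints are distinct and whether they all lie in $\N[i]\cap\mathbf{w}_j$. If $i\notin\mathbf{w}_j$, the endpoints are distinct vertices of $\mathbf{w}_j\setminus\{i\}$, contributing $c_j$ elements to $\N[i]\cap\mathbf{w}_j$. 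If $i\in\mathbf{w}_j$, then $\dm(i,i)=0$ forces the self-loop into the selection (aided by the tie-breaking rule), and the remaining $c_j-1$ arcs end in $\mathbf{w}_j\setminus\{i\}$; combined with $i\in\N[i]\cap\mathbf{w}_j$, this again yields at least $c_j$ elements. Either way $|\N_{G_\mathcal{C}}[i]\cap\mathbf{w}_j|\geq c_j$.

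Second, for the overall constraint, I would compute $|\N_{G_w}[i]|$ exactly. Let $j_0$ be the (unique) index with $i\in\mathbf{w}_{j_0}$; such a $j_0$ exists because the $\mathbf{w}_j$ partition $\mathbf{U}$. The arcs out of $i$ in $G_w$ consist of (i) the self-loop together with $c_{j_0}-1$ further arcs into $\mathbf{w}_{j_0}\setminus\{i\}$, and (ii) for each $j\neq j_0$, exactly $c_j$ arcs into $\mathbf{w}_j$. Because the sets $\mathbf{w}_1,\dots,\mathbf{w}_k$ are pairwise disjoint, the endpoints across different $j$ are distinct, so
\begin{equation*}
|\N_{G_w}[i]| \;=\; 1 + (c_{j_0}-1) + \sum_{j\neq j_0} c_j \;=\; \sum_{j=1}^{k} c_j.
\end{equation*}
Now $G_r$ is, by definition, a $r$-nearest-neighbor digraph on arcs in $G(\mathbf{U}\rightarrow\mathbf{U})$ that are not in $G_w$; hence the $r$ arcs out of $i$ added by $G_r$ point to $r$ distinct vertices lying outside $\N_{G_w}[i]$ (assuming $r>0$; if $r\leq 0$ then $G_r$ contributes nothing and the bound already holds trivially). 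Adding these to $\N_{G_w}[i]$ gives
\begin{equation*}
|\N_{G_\mathcal{C}}[i]| \;\geq\; \sum_{j=1}^{k} c_j + \max(r,0) \;\geq\; \sum_{j=1}^{k} c_j + r \;=\; t.
\end{equation*}

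The main obstacle, such as it is, is purely notational: keeping track of the self-loop so that it is neither over- nor under-counted when comparing arcs-out-of-$i$ with members of $\N[i]$, and verifying that the $r$ new arcs introduced by $G_r$ genuinely lie outside $\N_{G_w}[i]$ rather than duplicating existing endpoints. Once those two bookkeeping points are made explicit, both inequalities follow immediately from the definitions of the constituent nearest-neighbor subgraphs and the disjointness of the treatment groups.
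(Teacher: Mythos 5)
Your proposal is correct and follows essentially the same route as the paper's proof, which simply observes that the two conclusions follow directly from the construction of $G_w$ (Step 1 guarantees $c_j$ arcs into each $\mathbf{w}_j$) and $G_r$ (Step 2 adds $r = t - \sum_j c_j$ further arcs to new endpoints). You merely make explicit the bookkeeping that the paper leaves implicit — the self-loop tie-breaking, the disjointness of the $\mathbf{w}_j$, and the fact that the $G_r$ arcs cannot duplicate $G_w$ endpoints — all of which is accurate.
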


\begin{lemma} \label{prop:NNGbound}
	The distance between any two vertices connected by an arc in the $\mathcal{C}$-compatible nearest neighbor digraph, $G_\mathcal{C} = (V, E_\mathcal{C})$, is less or equal to the maximum within-group distance in an optimal matching:
	\begin{equation}
	\forall (i,j)\in E_\mathcal{C},\; \dm(i,j) \leq \min_{\mathbf{M}\in\mathcal{M}_\mathcal{C}} L^{Max}(\mathbf{M}).
	\end{equation}
\end{lemma}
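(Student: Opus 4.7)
The plan is to prove the bound separately for arcs in $G_w$ and for arcs in $G_r \setminus G_w$, using the decomposition $G_\mathcal{C} = G_w \cup G_r$. Since the right-hand side is a minimum, it suffices to fix an arbitrary admissible matching $\mathbf{M}^* \in \mathcal{M}_\mathcal{C}$, set $L^* = L^{Max}(\mathbf{M}^*)$, and show $\dm(i,j) \leq L^*$ for every arc $(i,j) \in E_\mathcal{C}$; taking the minimum over $\mathcal{M}_\mathcal{C}$ at the end yields the stated inequality. For each vertex $i$, write $\mathbf{m}^*(i)$ for the matched group of $\mathbf{M}^*$ containing $i$. The driving idea is that every arc in $G_\mathcal{C}$ is placed to satisfy one of the constraints in $\mathcal{C}$, and admissibility of $\mathbf{M}^*$ forces $\mathbf{m}^*(i)$ to contain enough vertices of the right type that a pigeonhole argument bounds the chosen arc length by $L^*$.

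For an arc $(i,j) \in G_w$, the construction places $j$ among the $c_{W_j}$ closest vertices to $i$ within $\mathbf{w}_{W_j}$. Admissibility gives $|\mathbf{m}^*(i) \cap \mathbf{w}_{W_j}| \geq c_{W_j}$, so there are at least $c_{W_j}$ vertices of treatment $W_j$ at distance at most $L^*$ from $i$. Hence the $c_{W_j}$-th smallest distance from $i$ to $\mathbf{w}_{W_j}$ is at most $L^*$, which is an upper bound on $\dm(i,j)$ by the nearest-neighbor selection rule.

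For an arc $(i,j) \in G_r \setminus G_w$, let $A = \N[i]$ denote the closed neighborhood of $i$ in $G_w$. Assuming the problem is feasible (otherwise $\mathcal{M}_\mathcal{C}$ is empty and the inequality is vacuous), the tie-breaking rule giving priority to self-loops, combined with disjointness of the $\mathbf{w}_l$, yields $|A| = \sum_{l=1}^k c_l = t - r$. By construction, $j$ is one of the $r$ closest vertices to $i$ in $\mathbf{U} \setminus A$. Admissibility gives $|\mathbf{m}^*(i)| \geq t$, so $|\mathbf{m}^*(i) \setminus A| \geq t - |A| = r$. These at least $r$ vertices lie in $\mathbf{U} \setminus A$ and are all at distance at most $L^*$ from $i$, so the $r$-th closest vertex to $i$ in $\mathbf{U} \setminus A$ is at distance at most $L^*$, from which $\dm(i,j) \leq L^*$ follows.

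The main obstacle is the careful bookkeeping for $|A|$ in the second case: one must invoke the tie-breaking convention and the disjointness of the $\mathbf{w}_l$ to rule out double-counting involving the self-loop at $i$, which otherwise could make $|A|$ strictly smaller than $\sum_l c_l$ and break the counting. Once that identity is secured, the rest is a clean pigeonhole argument paired with the defining property of the nearest-neighbor digraph.
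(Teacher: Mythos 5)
Your proof is correct, but it takes a genuinely different route from the paper's. You argue arc by arc: fixing any admissible matching $\mathbf{M}^*$ with $L^* = L^{Max}(\mathbf{M}^*)$, you use admissibility of the group $\mathbf{m}^*(i)$ to exhibit, for each arc $(i,j)$, enough candidates of the required type at distance at most $L^*$ from $i$ that the nearest-neighbor selection rule cannot have reached farther than $L^*$ --- a local pigeonhole for the $G_w$ arcs (at least $c_{W_j}$ members of $\mathbf{w}_{W_j}$ in $\mathbf{m}^*(i)$) and for the $G_r$ arcs (at least $t - \sum_l c_l = r$ members of $\mathbf{m}^*(i)$ outside $i$'s $G_w$-neighborhood). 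The paper instead argues globally by contradiction: it first establishes that $G_\mathcal{C}$ has the smallest maximum arc weight among all $\mathcal{C}$-compatible digraphs (its Lemma \ref{ccissmallest}), then compares two threshold graphs --- one with arcs strictly below the maximum weight of $G_\mathcal{C}$, one with arcs weakly below the optimum $w^*$, the latter being a supergraph of the optimal cluster graph --- and derives a contradiction if $w_\mathcal{C}^+ > w^*$. Your version is more elementary and self-contained (it needs no auxiliary minimality lemma or threshold-graph construction), and it proves the slightly stronger statement that every arc is bounded by $L^{Max}(\mathbf{M})$ for \emph{every} admissible $\mathbf{M}$, not only the optimal one. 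What the paper's template buys is reusability: essentially the same threshold-graph argument is recycled verbatim for the treated--control objective in Lemma \ref{CClessthanM2}, whereas your direct counting would need a separate adaptation there. One small point to tighten: your identity $|A| = \sum_l c_l$ for the closed $G_w$-neighborhood relies on the self-loop at $i$ being selected, which requires $c_{W_i} \geq 1$; if some $c_{W_i} = 0$ the candidate set for $G_r$ at $i$ also contains $i$ itself, but the count of candidates within distance $L^*$ still meets $r$ (since $\dm(i,i)=0$), so the conclusion survives --- it is just worth stating explicitly rather than folding into the generic bookkeeping.
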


Lemma \ref{prop:closedN-admissible} states that the $\mathcal{C}$-compatible nearest neighbor digraph encodes the matching constraints in the units' neighborhoods in $G_\mathcal{C}$.
Admissibility of $\mathbf{M}_{alg}$ follows from that each matched group is a superset of a closed neighborhood.
Since each neighborhood satisfy the matching constraints, so will each group.

Lemma \ref{prop:NNGbound} provides a connection between the arc weights in the $\mathcal{C}$-compatible nearest neighbor digraph and the maximum distance in the optimal solution.
This follows from that a digraph compatible with $\mathcal{C}$ (in the sense that it satisfies Lemma \ref{prop:closedN-admissible}) can be constructed as a subgraph of the cluster graph induced by an optimal matching.
A $\mathcal{C}$-compatible digraph constructed in this way satisfies the property in Lemma \ref{prop:NNGbound} because its construction does not require the addition of any arcs not already in the optimal matching.
We show in the supplementary materials that $G_\mathcal{C}$ is the smallest digraph compatible with $\mathcal{C}$.
Consequently, the distances in $G_\mathcal{C}$ must be bounded in the same way as in the subgraph induced by an optimal matching, and Lemma \ref{prop:NNGbound} follows.

Approximate optimality follows from the triangle inequality.
In particular, as shown in supplementary materials, two units in the same matched group is at most at a geodesic distance of four arcs in the $\mathcal{C}$-compatible nearest neighbor digraph.
The worst case is when the five vertices connected by the arcs are lined up on a straight line in the metric space.
In that case, the distance between the two end vertices is the sum of the distances of the intermediate arcs.
Lemma \ref{prop:NNGbound} provides a bound for the intermediate arc distances and, thus, a bound for all within-group distances.

\begin{theorem} \label{prop:approx-opt}
	$\mathbf{M}_{alg}$ is a 4-approximate generalized full matching with respect to the matching constraint $\mathcal{C}=(c_1, \cdots, c_k, t)$ and matching objective $L^{Max}$:
	\begin{equation}
	\mathbf{M}_{alg}\in \mathcal{M}_\mathcal{C}, \qquad \text{and} \qquad L^{Max}(\mathbf{M}_{alg}) \leq \min_{\mathbf{M}\in\mathcal{M}_\mathcal{C}} 4 L^{Max}(\mathbf{M}).
	\end{equation}
\end{theorem}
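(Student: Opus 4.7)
The plan is to split the theorem into its two parts---admissibility of $\mathbf{M}_{alg}$ and the factor-of-four bound on $L^{Max}$---and to prove each using the two preceding lemmas together with the symmetry and triangle inequality of $\dm$.

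For admissibility, I would first observe that Steps 4 and 5 together assign exactly one label to every vertex of $\mathbf{U}$, so the label classes form a partition of $\mathbf{U}$ and the spanning and disjoint conditions of Definition \ref{def:agfm} hold by construction. For the treatment and overall size conditions, I would note that every matched group is a superset of the closed neighborhood $N[s]$ of its seed $s$ in $G_\mathcal{C}$, since all of $N[s]$ is assigned $s$'s label in Step 4. By Lemma \ref{prop:closedN-admissible}, $N[s]$ by itself already satisfies $|N[s]\cap \mathbf{w}_j|\ge c_j$ for every $j$ and $|N[s]|\ge t$, and adding further vertices to the group in Step 5 only increases these counts, so the constraints $\mathcal{C}$ are met for every matched group.

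For the approximation bound, set $L^* = \min_{\mathbf{M}\in\mathcal{M}_\mathcal{C}} L^{Max}(\mathbf{M})$, pick any two units $i,j$ sharing a matched group with seed $s$, and split into cases by whether each was labeled in Step 4 (i.e.\ lies in $N[s]$) or in Step 5. If both lie in $N[s]$, then the arcs $(s,i)$ and $(s,j)$ are in $G_\mathcal{C}$, and Lemma \ref{prop:NNGbound} with symmetry and the triangle inequality yields $\dm(i,j)\le 2L^*$. If exactly one, say $i$, was labeled in Step 5, then $i$ has some labeled neighbor $\ell\in N[i]$ carrying $s$'s label, and by the independence property of seeds $\ell$ must lie in $N[s]$; chaining the three arcs $(i,\ell)$, $(s,\ell)$, $(s,j)$ gives $\dm(i,j)\le 3L^*$. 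If both were labeled in Step 5, I pick analogous $\ell_i\in N[i]\cap N[s]$ and $\ell_j\in N[j]\cap N[s]$ and chain the four arcs $(i,\ell_i)$, $(s,\ell_i)$, $(s,\ell_j)$, $(j,\ell_j)$ to obtain $\dm(i,j)\le 4L^*$. Since the four-arc case dominates, $L^{Max}(\mathbf{M}_{alg})\le 4L^*$, as claimed.

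The main obstacle is bookkeeping rather than any deep combinatorics. The arcs of $G_\mathcal{C}$ are directed (they point from each vertex out to its own nearest neighbors under the various constraints), so the four-arc path in the worst case is traversed partly with and partly against the orientation; one has to cite symmetry of $\dm$ to apply Lemma \ref{prop:NNGbound} to each leg. A second small point is to justify that a Step-5 vertex assigned label $s$ really has a labeled neighbor in $N[s]$: this uses the independence property of the seed set, which forces Step-4 labels to identify seeds uniquely, so a shared label in $N[i]$ and $N[s]$ actually witnesses membership in $N[s]$. Once these are in hand, Lemma \ref{prop:NNGbound} and the triangle inequality combine cleanly to yield the factor of four.
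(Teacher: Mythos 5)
Your proposal is correct and follows essentially the same route as the paper: admissibility via the partition structure of Steps 4--5 plus Lemma \ref{prop:closedN-admissible} applied to the seeds' closed neighborhoods (the paper's Lemma 6), and the factor-of-four bound by bounding every group member's distance to its seed by two arcs of $G_\mathcal{C}$, invoking Lemma \ref{prop:NNGbound} together with the symmetry and triangle-inequality properties of $\dm$ (the paper's Lemma 7). The only cosmetic difference is that you organize the distance argument as a case analysis over pairs, whereas the paper bounds each vertex's distance to the common seed by $2\lambda$ and then applies the triangle inequality once more; the paper also isolates as a separate lemma the fact (via maximality of the seed set) that every Step-5 vertex has a labeled neighbor, which you use implicitly.
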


A similar strategy can be used to bound the $L^{Max}_{tc}$ objective for the matching produced by the algorithm.
In particular, Lemma \ref{prop:NNGbound} holds also for the bottleneck objective function for distances between treated and controls for conventional full matching problems, as described by Definition \ref{def:otfm}.

\begin{theorem} \label{prop:approx-opt-trad}
	$\mathbf{M}_{alg}$ is a 4-approximate conventional full matching with respect to the matching constraint $\mathcal{C}=(1, \cdots, 1, k)$ and matching objective $L^{Max}_{tc}$:
	\begin{equation}
	\mathbf{M}_{alg}\in \mathcal{M}_\mathcal{C}, \qquad \text{and} \qquad L^{Max}_{tc}(\mathbf{M}_{alg}) \leq \min_{\mathbf{M}\in\mathcal{M}_\mathcal{C}} 4 L^{Max}_{tc}(\mathbf{M}).
	\end{equation}
\end{theorem}

\subsubsection{Complexity}

Following the convention in the matching literature \citep{Abadie2006}, we assume that the number of treatment conditions and the matching constraints are fixed asymptotically in the following theorem.
The time complexity is still polynomial if we let the number of treatment conditions and the matching constraints grow proportionally in $n$, but the exposition becomes less clear.

\begin{theorem} \label{prop:complexity}
	In the worst case, the generalized full matching algorithm terminates in polynomial time using linear memory.
\end{theorem}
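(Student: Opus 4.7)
The plan is to bound the time and memory used by each of the five steps of the algorithm separately, then sum the bounds, using the hypothesis that $k$ and $\mathcal{C}=(c_1,\ldots,c_k,t)$ are all constants independent of $n$.

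For memory, the key observation is that every vertex in $G_w$ has out-degree at most $c_1+\cdots+c_k$, every vertex in $G_r$ has out-degree at most $r=t-c_1-\cdots-c_k$, and hence every vertex in $G_\mathcal{C}$ has out-degree at most $t$. Since $t$ is constant, $|E_\mathcal{C}|=O(n)$, so each of $G_w$, $G_r$, $G_\mathcal{C}$ can be stored using adjacency lists of size $O(n)$. The seed set $\mathbf{S}$, the claimed-vertex flags, the labels assigned in Steps 4--5, and any other bookkeeping arrays are all of length $n$. This gives linear total memory.

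For time, I would analyze each step as follows. In Step 1, for each treatment condition $j$ and each vertex $i\in\mathbf{U}$, maintain a bounded max-heap of the $c_j$ closest units in $\mathbf{w}_j$ found so far while scanning through $\mathbf{w}_j$; each insertion costs $O(\log c_j)=O(1)$ and there are $|\mathbf{w}_j|\le n$ insertions per vertex, so a single condition costs $O(n^2)$, and summing over the $k$ fixed conditions keeps the bound at $O(n^2)$. Step 2 is analogous: finding the $r$ nearest neighbors of each vertex in the complete digraph, while skipping arcs already in $G_w$, costs $O(n^2)$ time. In Step 3, I would use the standard greedy procedure for a maximal independent set: iterate through vertices, and for each $i$ check whether any vertex in $N[i]$ is already ``claimed''; if not, add $i$ to $\mathbf{S}$ and claim every vertex of $N[i]$. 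Since $|N[i]|\le t=O(1)$, processing each vertex takes constant time, for $O(n)$ overall. Step 4 labels the at-most-$|\mathbf{S}|$ seed neighborhoods, each of constant size, in $O(n)$ time. Step 5 assigns to each still-unlabeled vertex a label from its constant-size neighborhood, again in $O(n)$ time.

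Summing, the worst-case running time is dominated by Steps 1 and 2 at $O(n^2)$, which is polynomial, and the memory is $O(n)$, proving the theorem. The main obstacle, and the step warranting the most care, is Step 3: one must verify that the greedy ``claim and mark'' procedure does produce a set satisfying both the Independence and Maximality conditions demanded by the algorithm. Independence follows because a vertex $i$ is added only when $N[i]$ is disjoint from every previously claimed neighborhood; maximality follows because any $j\notin\mathbf{S}$ must have had some vertex of $N[j]$ already claimed at the time it was processed, and that claimed vertex belongs to $N[s]$ for some seed $s$, so $N[j]\cap N[s]\neq\emptyset$ as required.
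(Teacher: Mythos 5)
Your proof is correct and follows essentially the same route as the paper: bound the construction of $G_\mathcal{C}$ by a polynomial (the paper gets $O(n^2\log n)$ via sorting where your bounded-heap scan gives $O(n^2)$, an immaterial difference for the claim), observe that the constant out-degree $t$ gives $O(n)$ edges and hence linear memory, and handle Steps 3--5 in linear time via the greedy claim-and-mark construction of the seed set. Your explicit verification that the greedy procedure satisfies both the independence and maximality conditions is a point the paper asserts only briefly, but the argument is the same.
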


The algorithm can be divided into two parts.
The first and more intricate part is the construction of the $\mathcal{C}$-compatible nearest neighbor digraph.
This essentially acts as a preprocessing step of the remainder of the algorithm.
The idea is that $G_\mathcal{C}$ encodes sufficient information about the sample to ensure approximate optimality, but that it is sparse enough to ensure quick execution.
Once $G_\mathcal{C}$ is constructed, the remaining steps are completed in linear time.

As discussed in the proof of Theorem \ref{prop:complexity} in the supplementary materials, the $\mathcal{C}$-compatible nearest neighbor digraph can be constructed by $O(n)$ calls to a nearest neighbor search function.
For an arbitrary metric, each such call has a time complexity of $O(n \log n)$ and a space complexity of $O(n)$.
It follows that the overall worst-case time complexity is $O(n^2 \log n)$.

Specialized nearest neighbor search algorithms exist for the most commonly-used distance metrics.
For example, when the metric is the Euclidean or Mahalanobis distance in a covariate space, large improvements can be expected by storing the data points in a kd-tree \citep{Friedman1977}.
Given that the covariate distribution is not too skewed, each search can then be completed in $O(\log n)$ time on average.
Using this approach, the overall average time complexity would be reduced to $O(n \log n)$.
However, these data structures do not scale well with the dimensionality of the metric space.
Feasible approaches in such cases include reducing the dimensionality prior to matching (e.g., by matching on estimated propensity scores, \citealp{Rosenbaum1983}), repeated matching in random low-dimensional projections of the covariate space \citep{Li2016} and using approximate nearest neighbor search algorithms \citep{Arya1998}.
Nearest neighbor search can be completed in constant time when the distances are absolute differences in propensity scores.

\subsection{Extensions} \label{sec:extensions}

The algorithm admits several extensions and refinements.
First, the set of seeds derived in the third step of the algorithm is not unique.
The properties discussed above hold for any set of seeds, but, heuristically, the performance of the matching depends on the units that are picked.
A valid set of seeds are equivalent to a maximal independent vertex set in the graph described by the adjacency matrix $\mathbf{A}\mathbf{A}' + \mathbf{A} + \mathbf{A}'$ where $\mathbf{A}$ is the adjacency matrix of $G_\mathcal{C}$. We expect improvements if a larger maximal independent set is used as seeds.

Second, in the fifth step of the algorithm, unassigned vertices are assigned to groups based on the $\mathcal{C}$-compatible nearest neighbor digraph.
However, as all matching constraints have already been fulfilled in the fourth step, the restrictions encoded in $G_\mathcal{C}$ are no longer necessary.
By restricting the matches to arcs in $G_\mathcal{C}$, we might miss matched groups that are closer to the unassigned units.
We could improve quality by searching for the closest labeled vertex among all vertices.

Third, it is sometime beneficial to relax the restriction that all units must be assigned to a matched group.
For example, if some regions of the covariate space is sparse with respect to a treatment condition, we could be forced to construct groups of poor quality in order to avoid discarding units.
A common way to avoid this is to apply a \emph{caliper}.
That is, we restrict the maximum allowable distance within any matched group to some value.
Units that cannot be assigned a group without violating the caliper are discarded.
In our algorithm, such a caliper can be imposed in the construction of $G_\mathcal{C}$.
In particular, by restricting the length of the arcs in $G_\mathcal{C}$, we implicitly restrict the maximum allowable distance in the resulting matching.
If the second refinement is implemented, we can impose a caliper (perhaps of different magnitude) also when assigning units in the fifth step.

Fourth, we are occasionally interested in estimating treatment effects only for some subpopulation.
For example, it is common to estimate the average treatment effect only for treated units.
To estimate such an effect, we only need that units from the subpopulation of interest are assigned to matched groups; other units can be left unassigned.
In most cases, it can still be beneficial to assign all units to groups as we then exploit all information in the sample.
However, if there are sparse regions in the covariate space, such a procedure might lead to poor match quality.
The algorithm allows us to focus the matching to a certain set of units.
In particular, in the first two steps, by substituting $\mathbf{U}$ with some subset $\mathbf{B} \subset \mathbf{U}$, we ensure that all units in $\mathbf{B}$ are assigned to matched groups.
Units not in $\mathbf{B}$ are only assigned to groups insofar as they are needed to satisfy the matching constraints.
The unassigned units may later be assigned to groups in the fifth step (preferably with a caliper to avoid affecting match quality).

\section{Simulation study} \label{sec:simulations}

We present the results from a small simulation study of the performance of an implementation of the presented algorithm.
The comparison is with conventional full matching and nearest neighbor matching with and without replacement.
We include both optimal and heuristic (``greedy'') implementations of nearest neighbor matching.

We focus on a simple setting where each unit has two covariates distributed uniformly on a plane:
\begin{equation}
X_{1i}, X_{2i} \sim \mathcal{U}(-1,1).
\end{equation}
To facilitate the comparison with previous methods, the units are randomly assigned to one of two treatment conditions, $W_i\in\{0,1\}$, based on a logistic propensity score that maps from the covariates to treatment probabilities:
\begin{equation}
\Pr(W_i = 1|X_{1i}, X_{2i}) = \logistic\left[\frac{(X_{1i} + 1)^2 + (X_{2i} + 1)^2 - 5}{2}\right].
\end{equation}
Units with larger covariate values are thus more likely to be treated.
The treatment probability ranges from 7.6\% at $(-1,-1)$ to 81.8\% at (1,1).
The unconditional treatment probability is 26.5\%. The outcome is given by:
\begin{equation}
Y_i = (X_{1i} - 1)^2 + (X_{2i} - 1)^2 + \varepsilon,
\end{equation}
where $\varepsilon$ is standard normal.
The outcome does not depend on the assigned treatments, so the treatment effect is constant at zero.

We use a version of the estimator discussed by \citet{Abadie2006} and \citet{Imbens2015} to estimate the average treatment effect for the subpopulation of treated units (\textsc{att}).
We derive the mean outcome difference between treated and control units within each matched group and then average the differences over all groups weighted by the number of treated units:
\begin{equation}
\widehat{\tau_\textsc{att}}(\mathbf{M}) = \sum_{\mathbf{m} \in \mathbf{M}} \frac{|\mathbf{w}_1\cap\mathbf{m}|}{|\mathbf{w}_1|} \left[\frac{\sum_{i\in\mathbf{m}} W_i Y_i}{|\mathbf{w}_1\cap\mathbf{m}|} - \frac{\sum_{i\in\mathbf{m}} (1-W_i) Y_i}{|\mathbf{w}_0\cap\mathbf{m}|}\right].
\end{equation}
Other estimation approaches are discussed by \citet{Stuart2010}.
In particular, matching facilitates permutation-based inference which many investigators find useful \citep{Rosenbaum2002,Rosenbaum2010}.

The Savio cluster at UC Berkeley was used to run the simulations using version 3.3.2 of R.
Each simulation round was assigned a single \textsc{cpu} core, largely reflecting the performance of a modern computer.
To derive generalized full matchings, we used a development version of the \texttt{quickmatch} R package.
Optimal conventional full matchings and optimal nearest neighbor matchings were derived using version 0.9-7 of the \texttt{optmatch} R package \citep{Hansen2006}.
Version 4.9-2 of the \texttt{Matching} R package \citep{Sekhon2011} was used to derive greedy matchings and matchings with replacement.
The conventional and generalized full matching methods use the same matching constraints, namely at least one treated and control unit in each group.
We used the Euclidean distances on the covariate plane as the similarity measure in all cases.
The \texttt{quickmatch} package uses the maximum within-group distance as its objective function, as discussed above.
The \texttt{Matching} and \texttt{optmatch} packages use the sum of within-group distances between treated and control units as their objectives.
All functionality beside the matching functions (e.g., estimators) was implemented independently and is common for all matching methods.

\subsection{Run time and memory}

We matched 1,000 randomly generated samples with each matching method for sample sizes ranging from 100 units to 100 million units.
Figure \ref{fig:complexity} presents the computational resources used by each implementation as a function of sample size.
Average runtimes are presented in the first three panels, and memory usage is presented in the subsequent three panels.
The results are split into several panels with different scales due to the large differences in performance.
Table \ref{tab:complexity} in the supplementary materials provides additional details.

\begin{sidewaysfigure}
	\centering

	\includegraphics[width=0.28\textwidth]{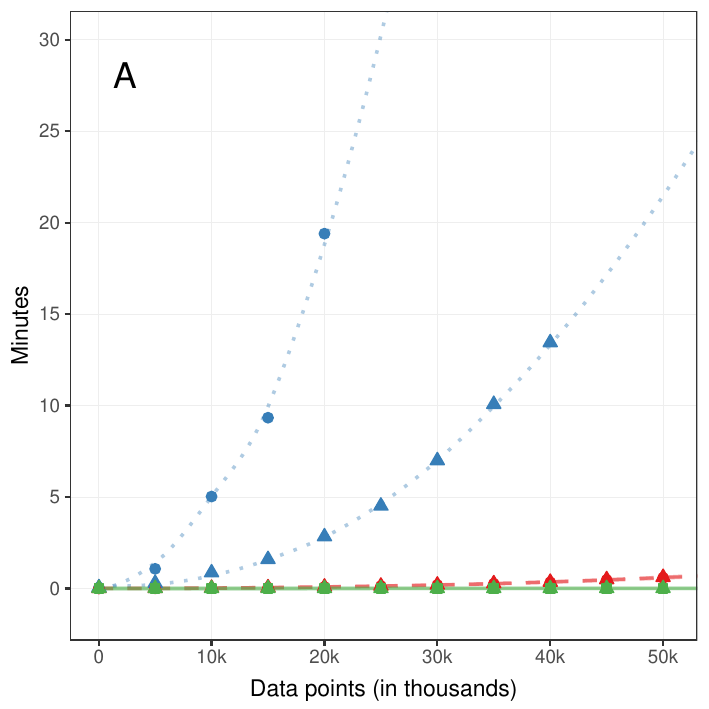}
	\hspace{0.075in}
	\includegraphics[width=0.28\textwidth]{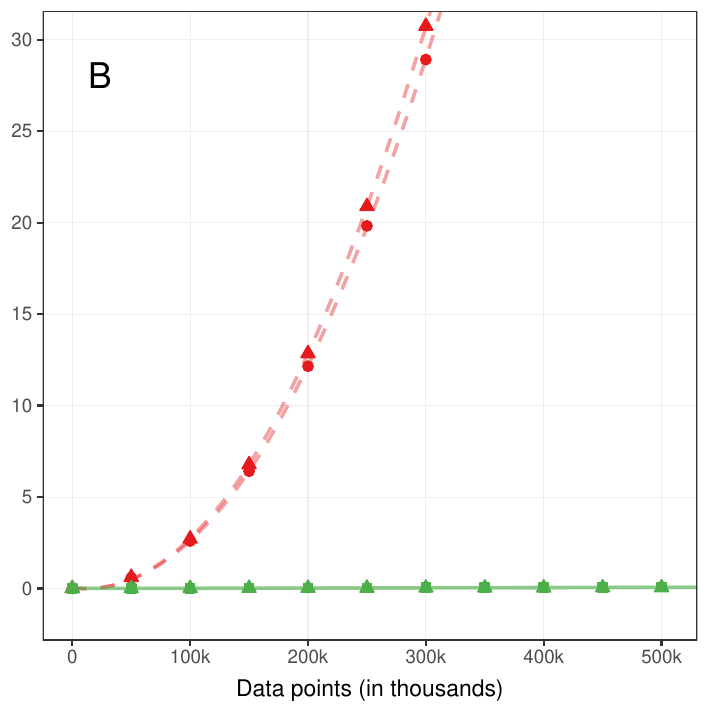}
	\hspace{0.075in}
	\includegraphics[width=0.28\textwidth]{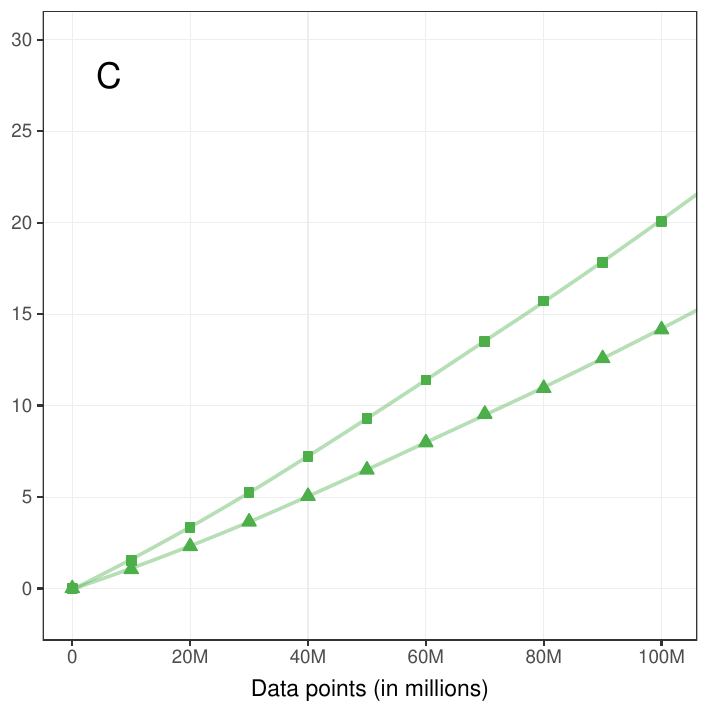}
	\vspace{0.05in}

	{\scriptsize Panels A, B \& C: Runtime (in minutes)} \\

	\vspace{0.3in}
	\includegraphics[width=0.28\textwidth]{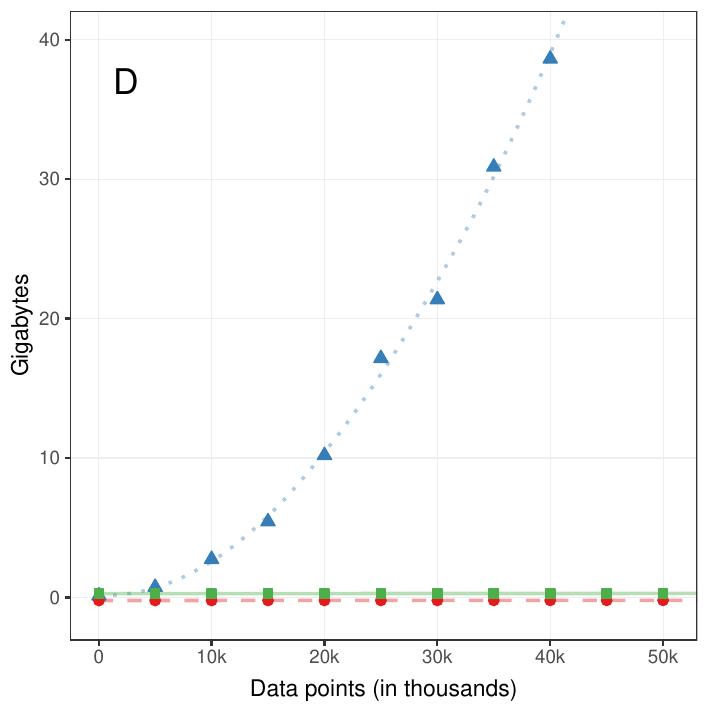}
	\hspace{0.075in}
	\includegraphics[width=0.28\textwidth]{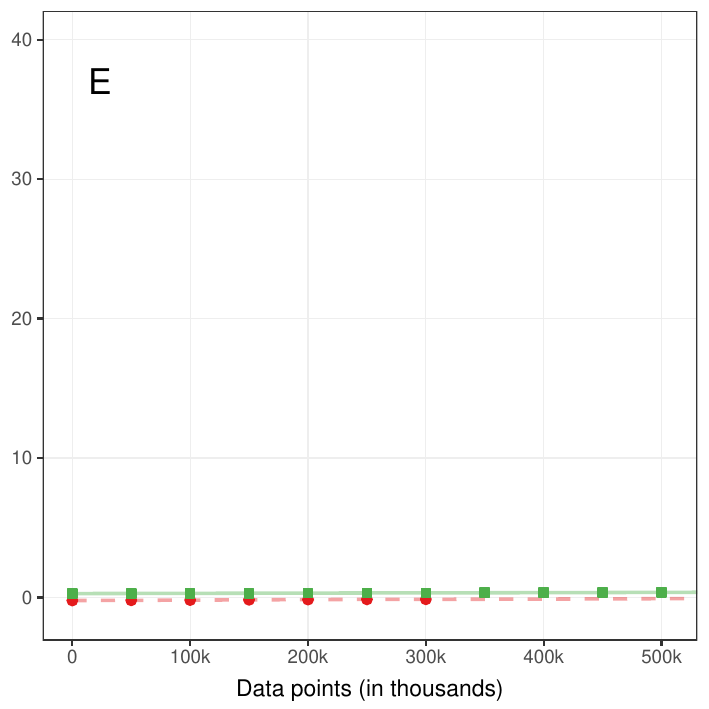}
	\hspace{0.075in}
	\includegraphics[width=0.28\textwidth]{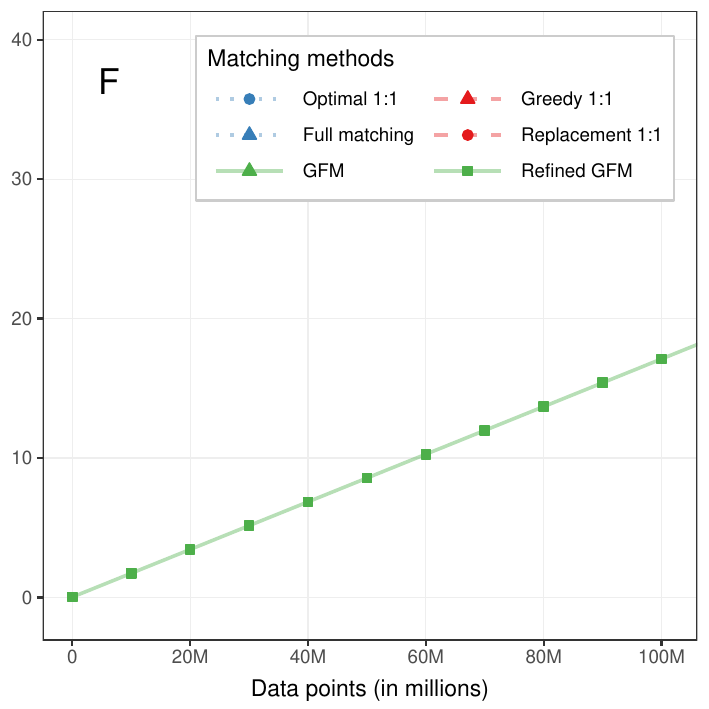}
	\vspace{0.05in}

	{\scriptsize Panels D, E \& F: Memory (in gigabytes)} \\

	\vspace{0.2in}
	\caption{\scriptsize%
	Runtime and memory use by matching method. Marker symbols are actual simulation results, and connecting lines are interpolations.
	Memory use was identical for methods from the same package, so we present results for only one implementation from each package.
	Each measure is based on 1,000 simulation rounds.
	The simulation errors are negligible.
	} \label{fig:complexity}
\end{sidewaysfigure}

Panels A and D present results for samples with up to 50,000 units.
For small sample sizes, all implementations perform well.
However, as the sample grows, the \texttt{optmatch} package struggles both with respect to runtime and memory.
Already with 10,000 units, optimal nearest neighbor matching takes more than 25 minutes to terminate on average, and with sample sizes over 40,000 units, the package allocates more than 40 gigabytes of memory.
The implementations in \texttt{optmatch} are the only ones that derives optimal solutions, but this comes at a large computational cost.
The other packages terminate close to instantly with negligible memory use for these sample sizes.

Results for samples with up to 500,000 units are presented in Panels B and E.
Implementations from the \texttt{quickmatch} package still terminates virtually instantly with negligible memory use.
The \texttt{Matching} package terminates quickly for samples with less than 200,000 units, but its runtime increases after that.
More than 30 minutes are required for samples larger than about 300,000 units. Memory use is, however, still negligible.

Panels C and F present samples with up to 100 million units.
Both implementations of the generalized full matching algorithm terminate quickly for sample sizes less than 20 million units.
With a sample of 100 million units, the implementation without refinements terminates within 15 minutes on average, while the version with refinements adds about 5 minutes to the runtime.
Memory use increases at a slow, linear rate. At 20 million units, it uses about 4 gigabytes of memory on average.
At 100 million units, it requires slightly more than 17 gigabytes.

\subsection{Match quality}

To investigate the quality of the matched groups, we matched 10,000 randomly generated samples containing 1,000 and 10,000 units.
The results differ little with the sample size, so the results for samples with 1,000 units are presented in the supplementary materials.
We also restrict our focus to performance with respect to covariate balance and the behavior of the effect estimator.
The performance with respect to group structure and aggregated distances is presented in the supplementary materials.

The first five columns in Table~\ref{tab:performance} present the absolute mean difference between the adjusted treatment groups on the first two moments of the covariates.
The adjustment used to assess covariate balance is the same as for the estimator.
We include the balance in the unadjusted sample before matching for comparison.
The scaling of the balance is arbitrary, so the results are normalized by the results for conventional full matching to ease interpretation.
How well the methods balance the samples depends on the data generating process.
If, for example, the propensity score is constant, matching would only correct chance imbalances due to sampling variation and, thus, only lead to minor improvements compared to the unadjusted sample.
We do not know how representative the simulation study is of the methods performance in general, but we have no reason to believe the qualitative conclusions would change.

All matching adjustments yield large improvements in covariate balance compared to the unadjusted sample.
The one exception is the cross-moment of the covariates for the two implementations of nearest neighbor matching without replacement, where the balance is slightly worse than when no adjustment is done.
This is largely an effect of that those moments are fairly balanced already in the unadjusted sample, but the behavior is still disconcerting.
The four remaining implementations lead to large improvements on all moments.
Nearest neighbor matching with replacement produces the greatest balance with the full matching methods as a close second.
The former method achieves this by dropping $54.7\%$ of the sample (see Table~\ref{tab:group-structure} in supplementary materials).
The full matching methods do not drop any units, with only a minor decrease in balance.
We note that generalized full matching without refinements leads to better balance than the other full matching methods.
We have not found an explanation for this behavior and do not expect it to hold across settings.

\begin{table}
	\centering
	\caption{Performance of matching methods with samples of 10,000 units.} \label{tab:performance}
	\resizebox{\textwidth}{!}{
	\begin{tabular}{l c rrrrr p{0.1in} rrrr}
		& & \multicolumn{5}{c}{Covariate balance} & & \multicolumn{4}{c}{Estimator performance} \\ \cline{3-7} \cline{9-12}
		& & \multicolumn{1}{r}{$X_1$} & \multicolumn{1}{r}{$X_2$} & \multicolumn{1}{r}{$X_1^2$} & \multicolumn{1}{r}{$X_2^2$} & \multicolumn{1}{r}{$X_1X_2$} & & \multicolumn{1}{r}{Bias} & \multicolumn{1}{r}{\textsc{se}} & \multicolumn{1}{r}{\textsc{rmse}} & \multicolumn{1}{r}{$\frac{\text{Bias}}{\text{\textsc{rmse}}}$} \\ \hline
		Unadjusted      & &                        500.32 &                        502.00 &                        101.07 &                        100.69 &                        131.37 & &                                                             1087.12 &                                                                1.48 &                                                               39.81 &                                                               0.999  \\
Greedy 1:1      & &                         50.19 &                         50.32 &                         62.74 &                         62.56 &                        139.13 & &                                                               53.68 &                                                                1.04 &                                                                2.22 &                                                               0.884  \\
Optimal 1:1     & &                         50.10 &                         50.25 &                         62.24 &                         62.07 &                        139.80 & &                                                               54.14 &                                                                1.04 &                                                                2.24 &                                                               0.885  \\
Replacement 1:1 & &                          0.41 &                          0.41 &                          0.73 &                          0.73 &                          0.80 & &                                                                0.32 &                                                                1.17 &                                                                1.17 &                                                               0.010  \\
Full matching   & &                          1.00 &                          1.00 &                          1.00 &                          1.00 &                          1.00 & &                                                                1.00 &                                                                1.00 &                                                                1.00 &                                                               0.037  \\
GFM             & &                          0.71 &                          0.71 &                          0.71 &                          0.72 &                          0.76 & &                                                                0.57 &                                                                1.03 &                                                                1.03 &                                                               0.020  \\
Refined GFM     & &                          1.03 &                          1.03 &                          0.97 &                          0.97 &                          1.09 & &                                                                1.19 &                                                                1.01 &                                                                1.01 &                                                               0.043  \\
 \hline
	\end{tabular}
	}
\end{table}

We continue with an investigation of how the matching methods affect the treatment effect estimator.
As with the balance measures, these results depend on the details of the data generating process.
While the details might not generalize, the qualitative conclusions should.
The final four columns in Table~\ref{tab:performance} present the results.

The first column presents the bias of the estimator.
As expected, the estimator has substantial bias in the unadjusted sample.
Nearest neighbor matching without replacement leads to a reduction of about 95\%.
While this is a substantial improvement, it is still more than an order of magnitude greater than the bias for nearest neighbor matching with replacement and full matching.

The second column reports the estimator's standard error.
The standard error depends mainly on two factors.
First, while matching is primarily used to adjust for systematic covariate differences between the treatment groups, it will also adjust for unsystematic differences.
Such chance-imbalances lead to increased estimator variance, and matching may, thus, improve precision.
Second, for a given level of balance, larger variation in the weights induced by the matching will lead to a higher standard error since the information in the sample is used less effectively.
The trade-off between balance and weight variability is reflected in the standard errors.
It is particularly evident when the standard errors for nearest neighbor matching with and without replacement are compared.
Matching with replacement induces a larger variation in the matching weights, and the standard error is $12.5\%$ larger than with matching without replacement.
The lowest variance is, however, given by the full matching methods.

The final two columns investigate the root mean square error (\textsc{rmse}) of the estimator and the bias's share of the \textsc{rmse}.
The full matching methods lead to both low bias and variance, so they perform well on the \textsc{rmse}.
Matching with replacement produces a 17\% larger \textsc{rmse} compared to conventional full matching, but it is still considerably lower than matching without replacement.
The bias to \textsc{rmse} ratio shows how accurate our inferences will be.
In particular, if the systematic error is a large part of the total error, variance estimators will not capture the uncertainty of the estimation and our conclusions will be misleading.
This measure is scale-free and is, therefore, not normalized.
As expected, the \textsc{rmse} consists almost exclusively of bias in the unadjusted sample.
Any conclusions from such analyses are likely misleading.
Matching without replacement only produces minor improvements.
In stark contrast, matching with replacement and full matching lead to large reductions in the ratio; the bias is only between 1\% and 4\% of the \textsc{rmse}.
This ratio critically depends on the dimensionality of the covariate space \citep{Abadie2006}, but we expect a similar pattern to hold across settings.

\section{Extrapolation from a voter mobilization experiment}\label{sec:application}

We return to the voter mobilization experiment.
The objective is to extrapolate the results to the overall population in the 2006 primary election in Michigan.
The experimental sample was constructed from Michigan's Qualified Voter File (\textsc{qvf}).
The Bureau of Elections in Michigan created the \textsc{qvf} in 2002 in an effort to modernize their highly decentralized voter registration system, and by 2006 the file contained 6,762,701 registered voters.\footnote{The process of transferring the voter registration system to the \textsc{qvf} was not complete by 2006, and a small portion ($5.8\%$) of the electorate is missing from the data set. This explains the difference between the $17.7\%$ turnout rate cited in the introduction and the rates presented in Table~\ref{tab:result-main}.}
We ask what the turnout would have been if the treatments in the experiment were implemented in the complete population of registered voters in the \textsc{qvf}.

The treatment arm of main interest in \citet{Gerber2008Social} was the postcard with the voting history of the recipient's neighborhood (the ``Neighbors'' condition).
The authors were, however, worried that the postcards could affect voting behavior through other channels than social pressure, and the additional treatment arms were added to shine light on this.
The first concern was that the postcard could simply remind the recipient of the upcoming election.
A condition was added (``Civic Duty'') with a postcard stating that it was the recipient's civic duty to vote in the upcoming election without information about voting history.
If social pressure was an important determinant of voting in this sample, we would expect there to be a noticeable difference in turnout between the Neighbors and Civic Duty conditions.
A second concern was the so called Hawthorne effect, namely that knowledge that one is being studied can itself affect behavior.
A third condition was added (``Hawthorne'') with a postcard stating that the authors would be studying voting behavior in the election, and particularly observe the recipient's voting decision through public records.
The final concern was that being reminded about one's own voting history might affect behavior irrespectively of knowledge about the voting pattern of one's neighbors.
The fourth condition (``Self'') was a postcard listing the voting history of the recipient without any information about their neighbors.
The final condition (``Control'') was a pure control group where the registered voters did not receive a postcard.

\begin{table}
	\centering
	\caption{Unadjusted and matching adjusted average turnout in the 2006 primary election.} \label{tab:result-main}
	\resizebox{\textwidth}{!}{
	\begin{tabular}{l rrrrrr}
		& \multicolumn{1}{r}{Control} & \multicolumn{1}{r}{Civic Duty} & \multicolumn{1}{r}{Hawthorne} & \multicolumn{1}{r}{Self} & \multicolumn{1}{r}{Neighbors} & \multicolumn{1}{r}{Non-experiment} \\ \hline
		Unadjusted turnout (\%) &   29.66 &   31.45 &   32.24 &   34.52 &   37.79 &   18.01 \\
Adjusted turnout (\%) &   21.43 &   23.73 &   23.01 &   25.16 &   26.88 &   18.60 \\
Observations &   191,243 &    38,204 &    38,218 &    38,201 &    38,218 & 6,418,617 \\
 \hline
	\end{tabular}
	}
\end{table}

The first row in Table~\ref{tab:result-main} presents the average turnout within each of the five treatment conditions.
We see that the Neighbors condition led to the largest turnout of $37.8\%$, but the three other postcard conditions still increased turnout compared to control.
Given the large size of the experiment, the standard errors are negligible compared to the treatment effects and will not concern us in this discussion.
The final column in Table~\ref{tab:result-main} presents the turnout among registered voters not included in the experiment.
People in this group was not sent a postcard, so their treatment is effectively the same as the control group in the experiment.
The turnout is, however, more than eleven percentage points higher in the control group than in the non-experimental group.
This is an indication of how highly selected experimental sample is.

To extrapolate the results from the experiment, we construct matched groups with all registered voters in the \textsc{qvf} so that each group contains at least one unit from each treatment condition.
The matching was done in R using the generalized full matching algorithm implemented in the \texttt{quickmatch} package, and it was completed within two minutes on a laptop computer.
We include all covariates discussed by the original authors in the matching: age measures in days, gender, and past voting history.
The voting history consists of indicators of whether the registered voter voted in the primary elections in August of 2000, 2002 and 2004, and in the partisan elections in November of 2000 and 2002.
The exclusion of the partisan election in 2004 is discussed below.
We also include geographical coordinates of the address of each registered voter.
Mahalanobis distances are used to measure similarity between voters.
Table~\ref{tab:balance} provides averages of all variables except the coordinates for the control condition and the non-experimental group before and after matching.
The supplementary material present unadjusted and adjusted covariate averages for all treatment conditions.
We note, as expected, large improvements in balance after matching adjustment, except for voting in the partisan election in 2004.

\begin{table}
	\centering
	\caption{Covariate balance with and without matching adjustment.} \label{tab:balance}
	\resizebox{\textwidth}{!}{
		\begin{tabular}{l c rr p{0.25in} rr}
		& & \multicolumn{2}{c}{Unadjusted} & & \multicolumn{2}{c}{Matching adjustment} \\ \cline{3-4} \cline{6-7}
		& & \multicolumn{1}{c}{Control} & \multicolumn{1}{c}{Non-experiment} & & \multicolumn{1}{c}{Control} & \multicolumn{1}{c}{Non-experiment} \\ \hline
		Birth year &  & 1956.19 & 1957.96 &  & 1958.16 & 1957.87 \\
Female (\%) &  &   49.89 &   53.32 &  &   53.29 &   53.15 \\
Voted Aug 2000 (\%) &  &   25.19 &   14.65 &  &   15.19 &   15.19 \\
Voted Aug 2002 (\%) &  &   38.94 &   22.59 &  &   23.42 &   23.43 \\
Voted Aug 2004 (\%) &  &   40.03 &   18.71 &  &   19.80 &   19.80 \\
Voted Nov 2000 (\%) &  &   84.34 &   52.49 &  &   54.11 &   54.11 \\
Voted Nov 2002 (\%) &  &   81.09 &   41.93 &  &   43.94 &   43.92 \\
Voted Nov 2004 (\%) &  &  100.00 &   67.57 &  &  100.00 &   68.76 \\
 \hline
	\end{tabular}
	}
\end{table}

The second row in Table~\ref{tab:result-main} presents the turnout of the six conditions after matching adjustment.
The figures should be interpreted as estimates of turnout of the six conditions if scaled up to the whole population.
That is, the turnout when all registered voters, both those in the experiment and those not, were exposed to the corresponding treatment.
We expect the estimates to be accurate representations of the counterfactual turnout if the matching was successful and the selection-on-observables assumption holds.
Indeed, we see that the turnout is lower for all treatment conditions compared to the experiment, reflecting that voters with a high baseline vote propensity were included in the experimental sample.

There is no direct way to test the selection-on-observables assumption, so the quality of the extrapolation can typically not be judged directly.
We can, however, do so in this case because the pure control group and the registered voter not included in the experiment received the same treatment.
The turnout should be essentially the same in the two groups if the matching adjustment was successful.
This is, however, not what we observe.
The turnout is almost three percentage points higher in the control group than in the non-experimental condition.
The failure of this placebo test is a strong indication that the extrapolation was unsuccessful.

We need to consider the voting history in the 2004 partisan election to understand this result.
The authors' sample selection was based on the proprietary indices discussed in the introduction, but they also required that all registered voters in the experimental sample had voted in the 2004 partisan election.
In contrast, only $67.6\%$ of the registered voters not included in the experiment voted in the election.
The consequence is that the support of the covariate distribution in the experiment does not overlap with covariate distribution in the population.
Subsequently, no adjustment exists to balance the distributions along this dimension.
The only thing that can salvage the validity of the matching estimates presented above is the assumption that voting behavior in the 2004 partisan election is independent of voting behavior in the 2006 primary election conditional on the remaining covariates.
This assumption is unlikely to hold.

A simple solution is to move the inferential target to the registered voter in the overall population who did vote in 2004 partisan election.
Overlap is ensured with respect to this subpopulation, so extrapolation can be successful without the strong assumptions that otherwise would have been necessary.
Of course, the effects in this subpopulation are likely different than the effects in the complete population, so the estimates may not provide an answer to the question of ultimate interest.
However, the available information limits what questions can be answered with credibility, and we must abide.

Table \ref{tab:result-g4} presents the turnout of the six conditions before and after adjustment for the subpopulation of voters in 2004 partisan election.
The unadjusted turnout in the non-experimental group increases compared to Table \ref{tab:result-main}.
These voters were, as we might expect, more likely to vote in the election in absence of any postcard.
There is, however, still a substantial difference between the non-experimental group and the control group in the experiment, indicating that further adjustments are required.
The second row in Table \ref{tab:result-g4} presents the results after adjustment using generalized full matching.
The difference between the control group and the non-experimental group is small but not zero.
This indicates that the adjustment is not perfect.
The remaining difference could, for example, be an indication that some additional information was used in sample selection that we are unaware of.
The placebo test can, however, be marked as a weak pass, and the results should therefore provide a reasonable, but not perfect, indication of the counterfactual turnout if the treatments were scaled up to this subpopulation.

\begin{table}
	\centering
	\caption{Turnout in the 2006 primary election among voters in the 2004 partisan election.} \label{tab:result-g4}
	\resizebox{\textwidth}{!}{
	\begin{tabular}{l rrrrrr}
		& \multicolumn{1}{r}{Control} & \multicolumn{1}{r}{Civic Duty} & \multicolumn{1}{r}{Hawthorne} & \multicolumn{1}{r}{Self} & \multicolumn{1}{r}{Neighbors} & \multicolumn{1}{r}{Non-experiment} \\ \hline
		Unadjusted turnout (\%) &   29.66 &   31.45 &   32.24 &   34.52 &   37.79 &   25.56 \\
Adjusted turnout (\%) &   26.59 &   28.86 &   27.95 &   30.87 &   32.90 &   25.89 \\
Observations &   191,243 &    38,204 &    38,218 &    38,201 &    38,218 & 4,337,193 \\
 \hline
	\end{tabular}
	}
\end{table}

The adjusted averages in Table \ref{tab:result-g4} suggests that the postcards would have increased turnout.
The Neighbors condition leads to the highest turnout at $32.9\%$.
This is almost five percentage points lower than in the experimental sample, accounting for the lower baseline propensity to vote.
Of particular note is that the effect of the Neighbors condition relative to control is lower than in the experiment.
The effect was $8.13$ percentage points in the experiment but only $6.31$ points in this subpopulation after adjustment.
A naive extrapolation using the treatment effect in the experiment would thus have been misleading.
The remaining treatment conditions follow a similar pattern
The turnout is lower after adjustment, and the effects relative to the control condition is lower than in the experiment.
Of note is the rank switch between the Civic Duty and Hawthorne conditions where the former had the lowest turnout among the postcard conditions in the experiment while the latter is lowest after the adjustment.


\section{Concluding remarks}

Matching is an important tool for empirical researchers, but the method is not always applicable.
Algorithms with guaranteed optimality properties have limited scope and require vast computational resources.
They are rarely useful when designs are complex or samples are large.
Investigators have therefore been forced to use alternative approaches to construct their matches, either by simplifying the problem or by using ad hoc procedures such as greedy matching.

We illustrate the issue with an extrapolation exercise of the treatment effects in a complex, large-scale experiment to an even larger population.
The results of the analysis indicates that the social pressure postcard would have lead to a noticeable increase in turnout if sent to all registered voters who voted in the 2004 partisan election.
The effect is, however, lower than if the results from the experiment were naively extrapolated.
Well-performing and computationally efficient methods of covariate adjustment are needed to do the extrapolation in a credible way.
The approach introduced in this paper provides a possible solution.

Generalized full matching is applicable to a wide range of studies.
Like its predecessor, the method admits good match quality without discarding large parts of the sample.
However, unlike conventional full matching, the method is not restricted to one particular design.
It can accommodate any number of treatment conditions and intricate compositional constraints over those conditions.
Studies with such designs have often solved several matching problems and merged the resulting matchings in a post-processing step.
For example, \citet{Silber2014} suggest matching the units in all treatment conditions to a common reference group, or a template in the authors' terminology.
Besides being tiring and error-prone, such approaches rarely maintain optimality even if the underlying methods are optimal with respect to each separate matching problem.
Generalized full matching allows the investigator to construct a single matching that directly corresponds to the desired design.
The algorithm used to construct these matchings, as implemented in the \texttt{quickmatch} package, uses computational resources efficiently.
This enables investigators to use the approach also in large studies where matching methods previously been infeasible.

We conclude these remarks by stressing that our algorithm should be used as a complement to existing matching methods.
There are settings where we would discourage its use.
Unlike existing approaches based on network flows \citep[see, e.g.,][]{Hansen2006}, the approach presented in this paper does not necessarily derive optimal solutions.
For this reason, best practice is still to use existing optimal algorithms when possible.
Furthermore, several refinements to the conventional full matching algorithm have been developed since its conception. \citet{Hansen2004} shows, for example, how to impose bounds on the ratio between the number of treated and control units within the matched groups.
This limits the weight variation of the matching and allows the investigator to directly control the bias-variance trade off related to how aggressive the adjustment is allowed to be.
When used with care, such control can greatly improve one's inferences.
While a similar effect can be achieved by adjusting the compositional constraints in a generalized full matching, it is a blunt solution without the same level of control as in \citeauthor{Hansen2004}'s formulation.

Network flow algorithms can also be adapted to construct matching with \emph{fine balance} \citep{Rosenbaum2007}.
The matched groups are here constructed so to ensure that the adjusted treatment groups have identical marginal distributions for a set of categorical covariates.
The current implementation of our algorithm cannot accommodate such global objectives.
\citet{Pimentel2015JASA} introduces a refinement of fine balancing where categorical covariates can be prioritized so that matches are constructed in a hierarchical fashion.
This ensures fine balance on covariates deemed more important before improving the balance in the rest of the covariate distribution.
The authors also show how large samples can be accommodated by thinning out the edges in the network flow problem.
Building on this idea, \citet{Yu2019} discuss a preprocessing procedure that allow investigators to use network flow algorithms with fine balancing constraints in moderately large studies with two treatment conditions.

It may be feasible to use algorithms with an exponential time complexity if the sample is sufficiently small.
One such example is \citet{Zubizarreta2012}, who provides a general framework for directly solving the integer programming problem that underlies the matching problem.
When feasible, this approach gives the investigator the greatest control of the matching problem which, when used with care, will allow for superior performance.
\citet{Bennett2018Building} show how the integer program in some instances can be relaxed to a more tractable linear program.
When used together with template matching to construct a small reference group, the approach can accommodate samples of several hundred thousand observations divided between more than two treatment conditions.

Investigators will, for good reasons, find these alternative matching approaches attractive in many situations.
They can, however, not be used with the complex compositional conditions and large samples accommodated by the method and algorithm introduced in this paper.
The extrapolation exercise from the voter mobilization experiment in Michigan is one such case.
We believe challenges of this type will become increasingly common as data sets grow in size, and we hope investigators will find the work presented here useful in such situations.

	\bibliographystyle{modapa}
	\bibliography{ref-nngmatching}

	\appendix

	\renewcommand\thetheorem{S\arabic{theorem}}
	\renewcommand\thesection{S\arabic{section}}
	\renewcommand\thetable{S\arabic{table}}
	\renewcommand\theequation{S\arabic{equation}}

	\setcounter{equation}{0}

	\section*{Proofs and additional results}

	\section{Brief overview of graph theory}

\begin{description}
	\item[Graph] A graph $G=(V,E)$ consists of a set of indices $V=\{a, b, \cdots\}$, called vertices, and a set of 2-element subsets of $V$, called edges.
	If an edge $\{i,j\}\in E$, we say that $i$ and $j$ are connected in the graph. In a directed graph (or digraph), the edges (which are then called arcs) are ordered sets $(i,j)$.
	In other words, $i$ can be connected to $j$ without the reverse being true in a digraph.

	\item[Weighted graph] A weighted graph assign a weight or cost to each edge or arcs.
	In our case, the weights are exclusively given by the distance of the connected vertices according to the distance metric used in the matching problem.

	\item[Adjacent] Vertices $i$ and $j$ are adjacent in $G$ if an edge (or an arc) connecting $i$ and $j$ exists in $E$.

	\item[Geodesic distance] The geodesic distance between $i$ and $j$ in $G$ is the number of edges or arcs (in our case, of any directionality) in the shortest path connecting $i$ and $j$ in $G$.

	\item[Subgraph] $G_1=(V_1,E_1)$ is a subgraph of $G_2=(V_2,E_2)$ if $V_1\subseteq V_2$ and $E_1\subseteq E_2$.
	In that case, we also say that $G_2$ is a supergraph of $G_1$. $G_1$ is a spanning subgraph of $G_2$ if $V_1=V_2$.

	\item[Complete graph] $G$ is complete if $\{i,j\}\in E$ for any two vertices $i,j\in V$.
	If $G$ is directed, both $(i,j)$ and $(j,i)$ must be in $E$.

	\item[Union] The union of $G_1=(V_1,E_1)$ and $G_2=(V_2,E_2)$ is $G_1\cup G_2=(V_1\cup V_2,E_1\cup E_2)$.

	\item[Graph difference] The difference between two graphs, $G_1=(V,E_1)$ and $G_2=(V,E_2)$, spanning the same set of vertices is $G_1 - G_2=(V,E_1\setminus E_2)$.

	\item[Independent set] A set of vertices $I \subseteq V$ is independent in $G$ if no two vertices in the set are adjacent:
	\begin{equation*}
	\forall \, i, j \in I,\; \{i,j\} \not\in E.
	\end{equation*}

	\item[Maximal independent set] An independent set of vertices $I$ in $G$ is maximal if for any additional vertex $i \in V$ the set $\{i\}\cup I$ is not independent:
	\begin{equation*}
	\forall \, i \in V \setminus I, \;\exists \, j \in I,\; \{i,j\} \in E.
	\end{equation*}

	\item[Cluster graph] The (directed) cluster graph induced by some partition of $V$ is the graph where arcs exists between any pair of units in the same component of the partition and no other arcs exist.

	\item[Adjancency matrix] The adjancency matrix $\mathbf{A}$ of a graph $G$ with $n$ vertices is a $n$-by-$n$ binary matrix where the entry $i,j$ is one if the edge $\{i,j\}$ or arc $(i,j)$ is in $E$, and otherwise zero.

\end{description}

\section{Proofs}

We here provide proofs for the propositions in the article.
The relevant propositions from the article are restated with their original numbering for reference.

\subsection{Optimality}

Recall the two objective functions:
\begin{eqnarray}
L^{Max}(\mathbf{M}) &=& \max_{\mathbf{m}\in\mathbf{M}}\max\{\dm(i,j) : i,j\in\mathbf{m}\},
\\
L^{Max}_{tc}(\mathbf{M}) &=& \max_{\mathbf{m}\in\mathbf{M}}\max\{\dm(i,j) : i,j\in\mathbf{m}\wedge W_i\neq W_j\}.
\end{eqnarray}

\begin{customlemma}{\ref{prop:closedN-admissible}}
	The closed neighborhood of each vertex in the $\mathcal{C}$-compatible nearest neighbor digraph satisfies the matching constraints $\mathcal{C}=(c_1, \cdots, c_k, t)$:
	\begin{equation}
	\forall i\in V,\, \forall j \in \{1, \cdots, k\},\, |\N[i] \cap \mathbf{w}_j| \geq c_j, \qquad \text{and} \qquad \forall i\in V,\, |\N[i]| \geq t.
	\end{equation}
\end{customlemma}

\begin{proof}
	This follows directly from the construction of $G_\mathcal{C}$.
	For each treatment-specific constraint, $c_1, \cdots, c_k$, the first step of the algorithm ensures that each vertex has that many arcs pointing to units assigned to the corresponding treatment condition.
	Similarly, if $t > c_1 + c_2 + \cdots + c_k$, the second step draws additional arcs so that each vertex has $t$ outward-pointing arcs in total.
\end{proof}

\begin{lemma} \label{prop:labeledvertex}
	Each vertex has at least one labeled vertex in its neighborhood in $G_\mathcal{C}$.
\end{lemma}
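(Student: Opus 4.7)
The plan is to do a short case analysis on whether the vertex in question is itself a seed, and otherwise to invoke the maximality property of the seed set $\mathbf{S}$ established in Step 3 of the algorithm. Labeled vertices (as defined in Step 4) are exactly the seeds together with every vertex lying in $\N[s]$ for some seed $s$, so showing that $\N[v]$ meets $\mathbf{S}$ or $\N[\mathbf{S}]:=\bigcup_{s\in\mathbf{S}}\N[s]$ suffices.

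First I would handle the trivial case: if $v\in\mathbf{S}$, then $v$ is itself a seed, hence labeled in Step 4, and $v\in\N[v]$ by definition of the closed neighborhood, so $\N[v]$ contains a labeled vertex.

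Next, suppose $v\notin\mathbf{S}$. By the maximality property in Step 3, there exists a seed $s\in\mathbf{S}$ with $\N[s]\cap\N[v]\neq\emptyset$. Pick any $u\in\N[s]\cap\N[v]$. Since $u\in\N[s]$, $u$ receives the label of $s$ in Step 4 and is therefore a labeled vertex. Since $u\in\N[v]$, the neighborhood of $v$ contains the labeled vertex $u$, as required.

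There is no real obstacle here: the statement is essentially an unpacking of the maximality clause in the definition of the seed set, combined with the way labels are propagated in Step 4. The only subtlety to keep straight is that ``$\N[s]\cap\N[v]\neq\emptyset$'' gives a common out-neighbor (or coincidence of endpoints) rather than immediately placing $v$ into $\N[s]$, but it is precisely this common vertex $u$ that provides the labeled witness inside $\N[v]$, so the argument goes through directly.
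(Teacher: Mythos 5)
Your proof is correct and uses the same essential ingredients as the paper's: the maximality clause of the seed set and the fact that every vertex in a seed's closed neighborhood is labeled in Step 4. The paper phrases the argument as a proof by contradiction while you argue directly (with an explicit seed/non-seed case split), but the content is the same.
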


\begin{proof}
	By definition, all vertices in a closed neighborhood of a seed are labeled.
	That is, $\ell$ is a labeled vertex if and only if $\exists i\in \mathbf{S}, \ell \in \N[i]$.
	Suppose the lemma does not hold, i.e., that some vertex $i$ does not have a labeled vertex in its neighborhood:
	\begin{equation}
	\exists i, \forall \ell\in \N[i], \nexists j\in \mathbf{S}, \ell\in \N[j]. \tag{$*$}
	\end{equation}
	It follows directly that $i$ cannot be a seed as all vertices in its neighborhood would otherwise be labeled by definition.
	Note that ($*$) entails that $i$'s neighborhood does not have any overlap with any seed's neighborhood:
	\begin{equation}
	\forall j\in \mathbf{S}, \N[i]\cap \N[j] = \emptyset,
	\end{equation}
	However, this violates the maximality condition in the definition of a valid set of seeds (see the third step of the algorithm) and, subsequently, a vertex such as $i$ is not possible.
\end{proof}

\begin{lemma} \label{admissible}
	$\mathbf{M}_{alg}$ is an admissible generalized full matching with respect to the matching constraint $\mathcal{C}=(c_1, \cdots, c_k, t)$:
	\begin{equation}
	\mathbf{M}_{alg}\in \mathcal{M}_\mathcal{C}.
	\end{equation}
\end{lemma}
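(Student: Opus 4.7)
My plan is to verify the four conditions in Definition \ref{def:agfm} one by one, leaning on Lemmas \ref{prop:closedN-admissible} and \ref{prop:labeledvertex} for the non-trivial parts.

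First, I would handle the \emph{spanning} condition. Since matched groups are defined as equivalence classes of vertices sharing a label, spanning reduces to showing that every vertex receives a label by the end of the algorithm. Seeds and their closed neighbors in $G_\mathcal{C}$ are labeled in Step 4. For every remaining vertex $i$, Step 5 succeeds provided there is some labeled vertex in $\N[i]$; this is exactly the content of Lemma \ref{prop:labeledvertex}, which follows from the maximality of the seed set $\mathbf{S}$. Thus every vertex ends up labeled.

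Second, the \emph{disjoint} condition is immediate from the construction: each vertex is assigned exactly one label (Step 4 labels each seed-neighborhood with a \emph{unique} label, and the independence of $\mathbf{S}$ guarantees these neighborhoods do not collide; Step 5 assigns a single existing label to each remaining vertex). Hence the equivalence classes induced by the labels partition $\mathbf{U}$.

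Third, for the \emph{treatment constraints} and the \emph{overall constraint}, I would observe that every matched group $\mathbf{m}\in\mathbf{M}_{alg}$ contains the closed neighborhood $\N[s]$ of its seed $s\in\mathbf{S}$: every vertex in $\N[s]$ is labeled in Step 4 with the same label as $s$, and no later step alters labels. By Lemma \ref{prop:closedN-admissible}, $\N[s]$ already satisfies $|\N[s]\cap\mathbf{w}_j|\geq c_j$ for every treatment $j$ and $|\N[s]|\geq t$. Since $\mathbf{m}\supseteq \N[s]$, the same inequalities hold for $\mathbf{m}$, giving conditions 3 and 4 of Definition \ref{def:agfm}.

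The only mildly subtle step is the spanning argument, but the previous lemmas already do the real work there; the rest is essentially bookkeeping over the steps of the algorithm. Putting these four observations together yields $\mathbf{M}_{alg}\in\mathcal{M}_\mathcal{C}$.
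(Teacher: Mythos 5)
Your proof is correct and follows essentially the same route as the paper's: spanning via Lemma \ref{prop:labeledvertex}, disjointness via the independence of the seed set and the single-label assignment in Steps 4--5, and the treatment and size constraints via Lemma \ref{prop:closedN-admissible} applied to the seed's closed neighborhood contained in each group. No gaps.
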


\begin{proof}
	We must show that $\mathbf{M}_{alg}$ satisfies the four conditions of an admissible generalized full matching in Definition \ref{def:agfm}.

	Step 5 of the algorithm ensures that $\mathbf{M}_{alg}$ is spanning.
	At this step, any vertex that lacks a label will be assigned the same label as one of the labeled vertices in its neighborhood.
	Lemma \ref{prop:labeledvertex} ensures that at least one labeled vertex exists in the neighborhoods of the unassigned vertices.

	No vertex is assigned more than one label; $\mathbf{M}_{alg}$ is disjoint.
	Vertices are only assigned labels in either Step 4 or 5, but never in both.
	Step 3 ensures that the neighborhoods of the seeds are non-overlapping.
	Thus vertices will be assigned at most one label in Step 4.
	In Step 5, vertices are explicitly assigned only one label even if several labels could be represented in a vertex's neighborhood.

	The remaining two conditions in Definition \ref{def:agfm} are ensured by Lemma \ref{prop:closedN-admissible}.
	Step 4 of the algorithm ensures that each matched group is a superset of a seed's neighborhood.
	From Lemma \ref{prop:closedN-admissible}, we have that this neighborhood will satisfy the matching constraints.
\end{proof}

\begin{lemma} \label{boundedby4}
	If the arc weights in the $\mathcal{C}$-compatible nearest neighbor digraph are bounded by some $\lambda$, the maximum within-group distance in $\mathbf{M}_{alg}$ is bounded by $4\lambda$:
	\begin{equation}
	\forall (i,j)\in E_\mathcal{C},\, \dm(i,j) \leq \lambda \implies \max_{\mathbf{m}\in\mathbf{M}_{alg}}\max\{\dm(i,j) : i,j\in\mathbf{m}\} \leq 4\lambda.
	\end{equation}
\end{lemma}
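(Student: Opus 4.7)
The plan is to reduce the claim to a geodesic-distance statement in $G_\mathcal{C}$ and then close the argument with the triangle inequality of the metric $\dm$. Concretely, I will show that any two vertices sharing a label in $\mathbf{M}_{alg}$ are connected by a path of at most four arcs in $G_\mathcal{C}$; combined with the hypothesis that every arc has weight at most $\lambda$, the triangle inequality then yields $\dm(i,j)\leq 4\lambda$.

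First, I would unpack how labels propagate. Each label in $\mathbf{M}_{alg}$ originates at a seed $s\in\mathbf{S}$. In Step 4, every vertex in the closed neighborhood $\N[s]$ of $G_\mathcal{C}$ inherits $s$'s label, so any such vertex is adjacent to $s$ (or equal to $s$) and therefore at geodesic distance at most $1$ from $s$. In Step 5, an unlabeled vertex $i$ is assigned the label of some labeled vertex $\ell\in \N[i]$; by Lemma \ref{prop:labeledvertex} such an $\ell$ exists, and because $\ell$ was labeled in Step 4 it lies in $\N[s']$ for some seed $s'$, whose label $i$ inherits. Hence $i$ is at geodesic distance at most $2$ from $s'$: one arc from $i$ to $\ell$, then at most one more from $\ell$ to $s'$ (or $\ell=s'$).

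Next, I would combine the two cases. Fix a matched group $\mathbf{m}\in\mathbf{M}_{alg}$ with originating seed $s$. The argument above shows every $i\in\mathbf{m}$ sits within geodesic distance $2$ of $s$ in $G_\mathcal{C}$. Concatenating paths through $s$, any two members $i,j\in\mathbf{m}$ are connected by a walk of at most four arcs $i=v_0\to v_1\to v_2 = s \to v_3\to v_4 = j$ (where some of the intermediate steps may be trivial). By the hypothesis, each traversed arc has distance at most $\lambda$, so iterating the triangle inequality of $\dm$ gives
\begin{equation*}
\dm(i,j)\;\leq\;\dm(v_0,v_1)+\dm(v_1,v_2)+\dm(v_2,v_3)+\dm(v_3,v_4)\;\leq\;4\lambda.
\end{equation*}
Taking the maximum over all pairs in all matched groups yields the desired bound.

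The only subtle point, and the one I would double-check in writing, is that geodesic distance in the excerpt is defined as the number of arcs in the shortest path regardless of direction, so the walk $i\to\ell\to s\leftarrow\ell'\leftarrow j$ in the Step-5/Step-4 mixed case is legitimate, and that the triangle inequality of $\dm$ applies along any ordered sequence of vertices (not only along directed paths). Both are immediate from the preliminaries, so no further machinery is needed.
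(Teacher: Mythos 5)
Your proposal is correct and follows essentially the same route as the paper's own proof: both show that every vertex lies within two arcs (hence within metric distance $2\lambda$, using the symmetry of $\dm$ to handle arc directions) of its group's unique seed, and then double via the triangle inequality. The only cosmetic difference is that you phrase the intermediate step in terms of geodesic distance in $G_\mathcal{C}$ before converting to metric distances, whereas the paper bounds the metric distances directly case by case (seed, labeled, unlabeled); you correctly flag and dispose of the directionality subtlety that this conversion requires.
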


\begin{proof}
	First, consider the distance from any vertex $i$ to the seed in its matched group, denoted $j$.
	If $i$ is a seed, we have $i=j$ as each matched group contains exactly one seed by construction.
	By the self-similarity property of distance metrics, the distance is zero: $\dm(i,j)=0$.
	If $i$ is a labeled vertex (i.e., assigned a label in Step 4 of the algorithm), we have $(j,i)\in E_\mathcal{C}$ by definition of labeled vertices.
	By assumption, $\dm(j,i)$ is bounded by $\lambda$.
	Due to the symmetry property of distance metrics, this also bounds $\dm(i,j)$.
	If $i$ is not a labeled vertex (i.e., assigned a label in Step 5), it will be adjacent in $G_\mathcal{C}$ to a labeled vertex, $\ell$, in its matched group as implied by Lemma \ref{prop:labeledvertex}.
	We have $(i, \ell)\in E_\mathcal{C}$ so the distance between $i$ and $\ell$ is bounded by $\lambda$.
	As $\ell$ is labeled, we have $(j, \ell)\in E_\mathcal{C}$ which implies that $\dm(j, \ell) \leq \lambda$.
	From the triangle inequality property of metrics, we have that the distance between $i$ and the seed, $j$, is at most $2\lambda$.

	Now consider any two vertices assigned to the same matched group.
	We have shown that the distance from each of these vertices to their (common) seed is at most $2\lambda$.
	By applying the triangle inequality once more, we bound the distance between the two non-seed vertices by $4\lambda$.
\end{proof}

\begin{lemma} \label{ccissmallest}
	The $\mathcal{C}$-compatible nearest neighbor digraph has the smallest maximum arc weight among all digraphs compatible with $\mathcal{C}$, i.e., all graph in which each vertex's closed neighborhood contains $c_1, c_2, \cdots, c_k$ vertices of each treatment condition and $t$ vertices in total.
\end{lemma}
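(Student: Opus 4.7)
The plan is to show that, for any $\mathcal{C}$-compatible digraph $G' = (V, E')$, the maximum arc weight in $G_\mathcal{C}$ is at most $\lambda' := \max_{(i,j) \in E'} \dm(i,j)$. I would argue vertex by vertex: show that the longest outgoing arc of each $i$ in $G_\mathcal{C}$ has weight at most the longest outgoing arc of $i$ in $G'$, and then maximize over $i$.

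Fix $i \in V$ and let $\lambda'_i$ denote the maximum weight over outgoing arcs from $i$ in $G'$. Every vertex in the closed neighborhood of $i$ in $G'$ is either $i$ itself or is reached by an arc of weight at most $\lambda'_i$, so every such vertex lies within distance $\lambda'_i$ of $i$. Because $G'$ is $\mathcal{C}$-compatible, the set of vertices at distance at most $\lambda'_i$ from $i$ therefore contains at least $c_j$ members of $\mathbf{w}_j$ for every $j$ and at least $t$ members in total.

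Next I would verify that every outgoing arc of $i$ in $G_\mathcal{C}$ also has weight at most $\lambda'_i$. The arcs in $G_w$ point from $i$ to the $c_j$ nearest vertices of each $\mathbf{w}_j$; since at least $c_j$ vertices of $\mathbf{w}_j$ already lie within distance $\lambda'_i$, these arcs have weight at most $\lambda'_i$. The arcs in $G_r$ point to the $r = t - (c_1 + \cdots + c_k)$ nearest vertices that are not already in the closed neighborhood of $i$ in $G_w$. This $G_w$-neighborhood has exactly $c_1 + \cdots + c_k$ elements, all within distance $\lambda'_i$ by the previous step. Since at least $t$ vertices lie within distance $\lambda'_i$ of $i$ altogether, at least $r$ of them must fall outside the $G_w$-neighborhood, so the $r$ nearest such vertices also lie within $\lambda'_i$.

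Taking the maximum over $i$ then yields the lemma. I expect the main subtlety to be the $G_r$ step: because the algorithm looks only at vertices not already captured by $G_w$, one has to rule out the possibility that it is forced to reach beyond distance $\lambda'_i$ just to find enough additional candidates. The counting argument above resolves this by combining $\mathcal{C}$-compatibility of $G'$ (which guarantees at least $t$ vertices within $\lambda'_i$) with the fact that $G_w$ accounts for exactly $c_1 + \cdots + c_k$ of them. The $G_w$ arcs themselves are the easy part, reducing directly to the definition of nearest neighbor.
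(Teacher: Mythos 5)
Your proof is correct and takes essentially the same route as the paper's: since the directed neighborhoods are independent across vertices, the problem decomposes vertex by vertex, and per vertex the nearest-neighbor construction minimizes the longest outgoing arc subject to the constraints. Your explicit counting argument for the $G_r$ step (at least $t$ vertices within $\lambda'_i$, of which exactly $c_1+\cdots+c_k$ are claimed by $G_w$, leaving at least $r$ candidates) is in fact more careful than the paper's proof, which simply asserts that drawing arcs to the closest constraint-satisfying vertices is what the algorithm does.
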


\begin{proof}
	The definition of the directed neighborhoods is asymmetric in the sense that if $i$ is in $j$'s neighborhood, $j$ is not necessarily in $i$'s neighborhood.
	Thus, whether a vertex's neighborhood satisfies the constraints is independent of whether other vertices' neighborhoods do so.
	As a consequence, to minimize the maximum arc weight, we can simply minimize the maximum arc weight in each neighborhood separately.
	To minimize the arc weights in a single neighborhood, we draw arcs to the vertices closest to the vertex so that the matching constraints are fulfilled.
	This is exactly the procedure the algorithm follows.
\end{proof}

\begin{customlemma}{\ref{prop:NNGbound}}
	The distance between any two vertices connected by an arc in the $\mathcal{C}$-compatible nearest neighbor digraph, $G_\mathcal{C} = (V, E_\mathcal{C})$, is less or equal to the maximum within-group distance in an optimal matching:
	\begin{equation}
	\forall (i,j)\in E_\mathcal{C},\; \dm(i,j) \leq \min_{\mathbf{M}\in\mathcal{M}_\mathcal{C}} L^{Max}(\mathbf{M}).
	\end{equation}
\end{customlemma}

\begin{proof}
	Let $w^*$ be the maximum within-group distance in an optimal matching and let $w_\mathcal{C}^+$ be the maximum weight of an arc in $G_\mathcal{C}$:
	\begin{align*}
	w^* &=\min_{\mathbf{M}\in\mathcal{M}_\mathcal{C}} L^{Max}(\mathbf{M}),
	\\
	w_\mathcal{C}^+ &= \max\{\dm(i,j): (i,j)\in E_\mathcal{C}\}.
	\end{align*}
	Furthermore, let $B_\mathcal{C} = (\mathbf{U}, E_{\mathcal{C}}^b)$ be the digraph that contains arcs between all units at a distance strictly closer than $w_\mathcal{C}^+$:
	\begin{equation}
	E_\mathcal{C}^b = \{(i,j): \dm(i,j) < w_\mathcal{C}^+\}.
	\end{equation}
	$B_\mathcal{C}$ must contain a vertex whose neighborhood does not satisfy the size constraints.
	If no such vertex exists, a digraph compatible with $\mathcal{C}$ with a smaller maximum arc weight than in $G_\mathcal{C}$ exists as a subgraph of $B_\mathcal{C}$.
	This contradicts Lemma \ref{ccissmallest}.

	Let $B_{op} = (\mathbf{U}, E_{op}^b)$ be the digraph that contains arcs between all units at a distance weakly closer than $w^*$:
	\begin{equation}
	E_{op}^b = \{(i,j): \dm(i,j) \leq w^*\}.
	\end{equation}
	By construction, $B_{op}$ is a supergraph of the cluster graph induced by the optimal matching.
	That is, arcs are drawn in $B_{op}$ between all units assigned to the same matched group in the optimal matching.
	As the optimal matching is admissible, each vertex's neighborhood in $B_{op}$ is compatible with $\mathcal{C}$.

	Suppose the lemma does not hold: $w_\mathcal{C}^+ > w^*$.
	It follows that $E_{op}^b \subset E_\mathcal{C}^b$.
	As at least one vertex's neighborhood does not satisfy the size constraint in $B_\mathcal{C}$, that must be the case in $B_{op}$.
	This, however, implies that the optimal matching is not admissible which, in turn, contradicts optimality.
\end{proof}

\begin{customtheorem}{\ref{prop:approx-opt}}
	$\mathbf{M}_{alg}$ is a 4-approximate generalized full matching with respect to the matching constraint $\mathcal{C}=(c_1, \cdots, c_k, t)$ and matching objective $L^{Max}$:
	\begin{equation}
	\mathbf{M}_{alg}\in \mathcal{M}_\mathcal{C}, \qquad \text{and} \qquad L^{Max}(\mathbf{M}_{alg}) \leq \min_{\mathbf{M}\in\mathcal{M}_\mathcal{C}} 4 L^{Max}(\mathbf{M}).
	\end{equation}
\end{customtheorem}

\begin{proof}
	Admissibility follows from Lemma \ref{admissible}. Approximate optimality follows from Lemma \ref{boundedby4} and \ref{prop:NNGbound}.
\end{proof}

\begin{lemma} \label{CClessthanM2}
	When all treatment-specific constraints are less or equal to one and the overall size constraint is the sum of the treatment-specific constraints, the distance between any two vertices connected by an arc in $G_\mathcal{C} = (V, E_\mathcal{C})$ is less or equal to the maximum within-group distance in an optimal matching with $L^{Max}_{tc}$ as objective:
	\begin{equation}
	c_1, c_2, \cdots, c_k \leq 1 \wedge t = {\textstyle\sum_{x=1}^k c_x} \Rightarrow \forall (i,j)\in E_\mathcal{C}, \dm(i,j) \leq \min_{\mathbf{M}\in\mathcal{M}_\mathcal{C}} L^{Max}_{tc}(\mathbf{M}).
	\end{equation}
\end{lemma}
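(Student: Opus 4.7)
The plan is to prove the bound directly by exploiting the special structure implied by $c_j \leq 1$ for all $j$ and $t = \sum_x c_x$. Under these hypotheses, the excess size $r = t - \sum_x c_x$ equals $0$, so Step 2 of the algorithm adds no arcs and $G_\mathcal{C} = G_w$. Moreover, for each $i \in \mathbf{U}$ and each treatment $\ell$ with $c_\ell = 1$, the algorithm adds a single arc from $i$ to a nearest vertex in $\mathbf{w}_\ell$; if $W_i = \ell$, the tie-breaking rule in Definition~\ref{kappaNN} forces this arc to be the self-loop $(i,i)$ since $\dm(i,i) = 0$. Self-loops have weight zero by self-similarity of the metric and trivially satisfy the desired bound, so it suffices to control non-self-loop arcs.

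First I would fix an optimal matching $\mathbf{M}^* \in \mathcal{M}_\mathcal{C}$ achieving $w^* := \min_{\mathbf{M}\in\mathcal{M}_\mathcal{C}} L^{Max}_{tc}(\mathbf{M})$, and pick an arbitrary non-self-loop arc $(i,j) \in E_\mathcal{C}$. By the structure above, there exists $\ell$ with $c_\ell = 1$ and $j \in \mathbf{w}_\ell$, and necessarily $W_i \neq \ell$ (otherwise $j = i$). Let $\mathbf{m}^* \in \mathbf{M}^*$ be the matched group containing $i$. Admissibility of $\mathbf{M}^*$ gives $|\mathbf{m}^* \cap \mathbf{w}_\ell| \geq c_\ell = 1$, so I can select some $i' \in \mathbf{m}^* \cap \mathbf{w}_\ell$. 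Since $W_{i'} = \ell \neq W_i$ and both $i, i' \in \mathbf{m}^*$, the pair $(i,i')$ is one of the pairs whose distance enters the maximum defining $L^{Max}_{tc}(\mathbf{M}^*)$, yielding $\dm(i,i') \leq L^{Max}_{tc}(\mathbf{M}^*) = w^*$.

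Finally, because $j$ was chosen by the algorithm as a nearest neighbor of $i$ in $\mathbf{w}_\ell$ and $i' \in \mathbf{w}_\ell$ is a competing candidate, we have $\dm(i,j) \leq \dm(i,i') \leq w^*$, which is the required bound.

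The main obstacle is not technical but one of bookkeeping: I must ensure that every non-self-loop arc truly crosses treatment classes. This is handled by the priority-to-self-loops clause in Definition~\ref{kappaNN} combined with the fact that each $c_j \in \{0,1\}$, so the algorithm never needs to pick a second unit within the same treatment class. Once that point is nailed down, the chain of inequalities $\dm(i,j) \leq \dm(i,i') \leq w^*$ follows immediately from admissibility of $\mathbf{M}^*$ and the nearest-neighbor property, and no triangle-inequality argument or $B_\mathcal{C}$/$B_{op}$ detour as in the proof of Lemma~\ref{prop:NNGbound} is required.
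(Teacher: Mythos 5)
Your proof is correct, and it takes a genuinely different route from the paper's. The paper splits the maximum arc weight into a same-treatment part $w_s^+$ (shown to be zero via the self-loop priority, exactly as you argue) and a cross-treatment part $w_d^+$, and then bounds $w_d^+$ by contradiction using two auxiliary bottleneck digraphs $B_d$ and $B_{op}$ together with the minimality of $G_\mathcal{C}$ from Lemma \ref{ccissmallest} -- the same template used to prove Lemma \ref{prop:NNGbound}. You instead give a direct witness argument: for a non-self-loop arc $(i,j)$ with $j\in\mathbf{w}_\ell$ and $W_i\neq\ell$, admissibility of the optimal matching supplies some $i'\in\mathbf{m}^*\cap\mathbf{w}_\ell$ in $i$'s group with $W_{i'}\neq W_i$, so $\dm(i,i')$ is counted by $L^{Max}_{tc}(\mathbf{M}^*)$, and the nearest-neighbor property of $j$ gives $\dm(i,j)\leq\dm(i,i')\leq w^*$. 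The bookkeeping you flag is indeed the only delicate point, and it is handled correctly: since each $c_\ell\leq 1$ and $r=0$, every arc of $G_\mathcal{C}=G_w$ is either a self-loop (weight zero, by the tie-break rule in Definition \ref{kappaNN} and self-similarity) or points into a treatment class different from $W_i$, which is exactly what makes the cross-treatment objective $L^{Max}_{tc}$ sufficient as a bound. Your argument is shorter, avoids the contradiction machinery and the reliance on Lemma \ref{ccissmallest}, and makes transparent why the hypothesis $c_\ell\leq 1$, $t=\sum_x c_x$ is needed; the paper's version buys uniformity with the proof of the general-constraint Lemma \ref{prop:NNGbound}, where the excess arcs in $G_r$ mix treatment classes and a per-arc witness is less clean.
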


\begin{proof}
	Let $w_{s}^+$ be the maximum weight of an arc connecting two units with the same treatment conditions in $G_\mathcal{C}$, let $w_{d}^+$ the maximum arc weight between units with different conditions:
	\begin{align*}
	w_{s}^+ &= \max\{\dm(i,j): (i,j)\in E_\mathcal{C} \wedge W_i = W_j\},
	\\
	w_{d}^+ &= \max\{\dm(i,j): (i,j)\in E_\mathcal{C} \wedge W_i \neq W_j\}.
	\end{align*}
	Note that:
	\begin{equation*}
	\max\{\dm(i,j): (i,j)\in E_\mathcal{C}\} = \max\{w_{s}^+, w_{d}^+\}.
	\end{equation*}

	Consider $w_{s}^+$.
	Since $c_1, c_2, \cdots, c_k \leq 1$ and $t = \sum_{x=1}^k c_x$, each unit will have at most one arc pointing to a unit with the same treatment condition as its own:
	\begin{equation}
	\forall i,\; |\{(i,j): (i,j)\in E_\mathcal{C} \wedge W_i = W_j\}|=c_{W_i} \leq 1.
	\end{equation}
	From the self-similarity and non-negativity properties of distance metrics, we have:
	\begin{equation}
	\forall i,j,\; 0=\dm(i,i) \leq \dm(i,j).
	\end{equation}
	By construction of $G_\mathcal{C}$, all arcs in the set will be self-loops and, thus, at distance zero:
	\begin{equation}
	w_{s}^+ = \max\{\dm(i,i): (i,i)\in E_\mathcal{C}\} = 0.
	\end{equation}
	From non-negativity, it follows that:
	\begin{equation*}
	\max\{\dm(i,j): (i,j)\in E_\mathcal{C}\} = \max\{0, w_{d}^+\} = w_{d}^+.
	\end{equation*}

	Let $w^*$ be the maximum within-group distance between units assigned to different treatment conditions when $L^{Max}_{tc}$ is used as objective:
	\begin{equation*}
	w^* = \min_{\mathbf{M}\in\mathcal{M}_\mathcal{C}} L^{Max}_{tc}(\mathbf{M}).
	\end{equation*}
	Let $B_{d} = (\mathbf{U}, E_{d}^b)$ be the digraph that contains all arcs between units that either are strictly closer than $w_{d}^+$ or have the same treatment condition:
	\begin{equation}
	E_{d}^b = \{(i,j): \dm(i,j) < w_{d}^+ \vee W_i = W_j\}.
	\end{equation}
	Following the same logic as in the proof of Lemma \ref{prop:NNGbound}, $B_{d}$ must contain a vertex whose neighborhood is not compatible with $\mathcal{C}$.

	Let $B_{op} = (\mathbf{U}, E_{op}^b)$ be the digraph that contains all arcs between units that either are weakly closer than $w^*$ or have the same treatment condition:
	\begin{equation}
	E_{op}^b = \{(i,j): \dm(i,j) \leq w^* \vee W_i = W_j\}.
	\end{equation}
	By construction, $B_{op}$ is a supergraph of the cluster graph induced by the optimal matching.
	That is, arcs are drawn in $B_{op}$ between all units assigned to the same matched group in the optimal matching.
	As the optimal matching is admissible, each vertex's neighborhood in $B_{op}$ is compatible with $\mathcal{C}$.

	Assume $w_{d}^+ > w^*$. It follows that $E_{op}^b \subset E_{d}^b$.
	As at least one vertex's neighborhood does not satisfy the size constraint in $B_{d}$, that must be the case in $B_{op}$.
	This, however, implies that the optimal matching is not admissible which, in turn, contradicts optimality.
	We conclude that $w_{d}^+ \leq w^*$.
\end{proof}

\begin{customtheorem}{\ref{prop:approx-opt-trad}}
	$\mathbf{M}_{alg}$ is a 4-approximate conventional full matching with respect to the matching constraint $\mathcal{C}=(1, \cdots, 1, k)$ and matching objective $L^{Max}_{tc}$:
	\begin{equation}
	\mathbf{M}_{alg}\in \mathcal{M}_\mathcal{C}, \qquad \text{and} \qquad L^{Max}_{tc}(\mathbf{M}_{alg}) \leq \min_{\mathbf{M}\in\mathcal{M}_\mathcal{C}} 4 L^{Max}_{tc}(\mathbf{M}).
	\end{equation}
\end{customtheorem}

\begin{proof}
	Admissibility follows from Lemma \ref{admissible}.
	Note that all distances considered by $L^{Max}_{tc}$ are considered by $L^{Max}$ as well.
	As a result, the latter acts as a bound for the former:
	\begin{equation}
	\forall\mathbf{M}\in\mathcal{M}_{\mathcal{C}}, L^{Max}_{tc}(\mathbf{M}) \leq L^{Max}(\mathbf{M}).
	\end{equation}
	Approximate optimality follows from Lemma \ref{boundedby4} and \ref{CClessthanM2}:
	\begin{equation*}
		L^{Max}_{tc}(\mathbf{M}_{alg}) \leq L^{Max}(\mathbf{M}_{alg}) \leq4\min_{\mathbf{M}\in\mathcal{M}_\mathcal{C}} L^{Max}_{tc}(\mathbf{M}).
	\end{equation*}
\end{proof}

\subsection{Complexity}

\begin{lemma} \label{ccinpolytime}
	A $\mathcal{C}$-compatible nearest neighbor digraph can be constructed in polynomial time using linear memory.
\end{lemma}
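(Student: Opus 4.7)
The plan is to exhibit an explicit assembly of $G_\mathcal{C} = G_w \cup G_r$ that invokes only $O(n)$ calls to a generic nearest neighbor subroutine and that never materializes any dense auxiliary graph. Under the convention that $k$ and the entries of $\mathcal{C}$ are fixed, the target output itself has only $nt = O(n)$ arcs, so the goal is a procedure whose running time is polynomial in $n$ and whose peak memory usage is linear in $n$.

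For $G_w$, I would iterate over each vertex $i \in \mathbf{U}$ and each treatment condition $j \in \{1, \dots, k\}$, invoking a nearest neighbor query that returns the $c_j$ vertices in $\mathbf{w}_j$ closest to $i$. A naive implementation (linear scan over $\mathbf{w}_j$ paired with a bounded-size heap) takes polynomial time and uses only $O(c_j)$ scratch memory per call. Across all $i$ and $j$ this amounts to $nk$ calls and adds $n\sum_j c_j$ arcs to $G_w$, well within the desired budget.

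For $G_r$, the superficially troublesome step is that the definition $G_r = \NN(r, G(\mathbf{U}\rightarrow\mathbf{U}) - G_w)$ appears to call for an explicit copy of the complete digraph on $\mathbf{U}$, which would cost quadratic memory. I would sidestep this by observing that $|\N[i]\cap \N[i]| - 1 \le t - r$ in $G_w$ for each $i$, so the $t$ nearest neighbors of $i$ in $\mathbf{U}$ (a single nearest neighbor query) must contain at least $r$ vertices that are not already adjacent to $i$ in $G_w$. Filtering the returned list against the previously stored adjacency $G_w[i]$ yields exactly the $r$ arcs of $G_r$ emanating from $i$, while using only $O(t)$ working memory. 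This filtering trick is the one place where the argument requires care; everything else reduces to routine bookkeeping on the outputs of the nearest neighbor routine.

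Summing up, the construction performs $n(k+1)$ nearest neighbor calls plus $O(nt)$ auxiliary work, and it stores $nt$ output arcs together with $O(t)$ scratch space per call. Since $k$ and the constraints in $\mathcal{C}$ are treated as constants, this yields polynomial running time and linear memory, which is exactly the claim of Lemma~\ref{ccinpolytime}. The main obstacle to guard against is precisely the implicit density of $G(\mathbf{U}\rightarrow\mathbf{U}) - G_w$ in the second step; the filtering trick above is what makes the linear-memory guarantee genuine.
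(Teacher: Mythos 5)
Your proposal is correct and follows essentially the same route as the paper: $n(k+1)$ per-vertex nearest-neighbor searches done sequentially, with polynomial time and linear peak memory since the output has only $nt=O(n)$ arcs. The one twist you add --- replacing the search over $G(\mathbf{U}\rightarrow\mathbf{U})-G_w$ by a top-$t$ query followed by filtering against $G_w[i]$, justified by the out-degree bound $t-r$ in $G_w$ --- is precisely the identity $\NN(r, G(\mathbf{U}\rightarrow\mathbf{U})-G_w)=\NN(r,\NN(t,G(\mathbf{U}\rightarrow\mathbf{U}))-G_w)$ that the paper states in the remark immediately following this lemma, so it is a valid (and slightly tidier) variant rather than a different argument.
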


\begin{proof}
	In the first step of the algorithm, we construct $G_w$ as the union of $\NN(c_x, G(\mathbf{U}\rightarrow \mathbf{w}_x))$ for each treatment condition $x$.
	The operands of this union can be constructed using nearest neighbor searches for each treatment condition.
	With a naive implementation, such searches can be done sequentially for each $i\in\mathbf{U}$ by sorting the set $\{\dm(i,j): j\in\mathbf{w}_x\}$ and drawing an arc from $i$ to the first $c_x$ elements in the sorted set.
	When using standard sorting algorithms, this has a time complexity of $O(n|\mathbf{w}_x| \log |\mathbf{w}_x|)$ and a space complexity of $O(c_x n)$ \citep{Knuth1998}.
	Note that $|\mathbf{w}_x| < n$ for all treatments, so the search requires $O(n^2 \log n)$ time.
	The union can be performed in linear time in the total number of arcs, $O[(c_1 + c_2 + \cdots + c_k) n]$.
	As each $\NN(c_x, G(\mathbf{U}\rightarrow \mathbf{w}_x))$ can be derived sequentially and the size constraints are fixed, the $G_w$ digraph can be constructed in $O(n^2 \log n)$ time.

	In the second step, $G_r$ can be constructed in a similar fashion.
	For each $i\in\mathbf{U}$, sort the set $\{\dm(i,j): j\in\mathbf{U} \wedge (i,j)\not\in E_{w}\}$ and draw an arc from $i$ to the first $r = t - c_1 - \cdots - c_k$ elements in that set.
	Like above, this has a complexity of $O(n^2 \log n)$.
	Finally, the union between $G_{w}$ and $G_{r}$ can be constructed in linear time in the total number of arcs.
	As the number of arcs per vertex is fixed at $t$, the union is completed in $O(n)$ time.
	The steps are sequential so the total complexity of both Step 1 and 2 is $O(n^2 \log n)$.
\end{proof}

\begin{remark} \label{ccinqlineartime}
	For most common metrics, standard sorting algorithms are inefficient.
	Storing the data points in a structure made for the purpose, such as a kd- or bd-tree, typically leads to large improvements \citep{Friedman1977}.
	Each $\NN(c_x, G(\mathbf{U}\rightarrow \mathbf{w}_x))$ can then be constructed in $O(n \log n)$ average time, without changing the memory complexity.
	However, this approach typically requires a preprocessing step to build the search tree.
	In the proof of Lemma \ref{ccinpolytime}, the search set is unique for each vertex when $G_{r}$ is constructed.
	We can, therefore, not use these specialized algorithms if we construct $G_{r}$ in the way suggested there.
	However, the construction can easily be transformed into a problem with a fixed search set. Note that:
	\begin{equation}
	\NN(r, G(\mathbf{U} \rightarrow \mathbf{U}) - G_w) = \NN(r, \NN(t, G(\mathbf{U} \rightarrow \mathbf{U})) - G_{w}).
	\end{equation}
	That is, finding the $r$ nearest neighbors not already connected in $G_{w}$ is the same as finding the $r$ nearest neighbors not already connected in $G_{w}$ among the $t$ nearest neighbors in the complete graph.
	The first nearest neighbor search, $\NN(t, G(\mathbf{U} \rightarrow \mathbf{U}))$, has a fixed search set and can thus be completed in $O(n \log n)$.
	The second nearest neighbor search involves sorting at most $t$ elements for each vertex, which is done in constant time as $t$ is fixed.
\end{remark}

\begin{customtheorem}{\ref{prop:complexity}}
	In the worst case, the generalized full matching algorithm terminates in polynomial time using linear memory.
\end{customtheorem}

\begin{proof}
	The algorithm runs sequentially.
	The first and second step can be completed in $O(n^2 \log n)$ worst case time as shown in Lemma \ref{ccinpolytime}, or, in many cases, in $O(n \log n)$ average time as discussed in Remark \ref{ccinqlineartime}.

	Step 3 and 4 can be done by sequentially labeling seeds and their neighbors as they are selected.
	Any vertex whose neighborhood does not contain any labeled vertices can be a valid seed, and any vertex that is adjacent to labeled vertices can never become a seed.
	Thus, traversing the vertices in any order and greedily selecting units as seed will yield a valid set of seeds.
	As the size of each seed's neighborhood is fixed at $t$, this step is completed in $O(n)$ time.

	Finally, assigning labels to unlabeled vertices in the last step can be done by traversing over their neighborhoods.
	Thus, Step 5 also requires $O(n)$ time to complete.
\end{proof}

\clearpage

\section{Additional simulation results}

The following tables present additional results from the simulation study.
Sections~\ref{sec:additional-distances} and~\ref{sec:additional-group} provide results about aggregated distances for the algorithms and the structure of the matched groups they produce.
Section~\ref{sec:additional-main} gives complete results for the measures presented in the paper.
These tables also include the results for 1:2-matching without replacement.

\subsection{Distances}\label{sec:additional-distances}

We investigate five different functions aggregating within-group distances:
\allowdisplaybreaks
\begin{align*}
L^{Max}(\mathbf{M}) &= \max_{\mathbf{m}\in\mathbf{M}}\max\{\dm(i,j) : i,j\in\mathbf{m}\},
\\[1ex]
L^{Max}_{tc}(\mathbf{M}) &= \max_{\mathbf{m}\in\mathbf{M}}\max\{\dm(i,j) : i,j\in\mathbf{m}\wedge W_i\neq W_j\},
\\[1ex]
L^{Mean}(\mathbf{M}) &= \sum_{\mathbf{m}\in\mathbf{M}}\frac{|\mathbf{w}_1\cap\mathbf{m}|}{|\mathbf{w}_1|}\mean\{\dm(i,j) :i,j\in\mathbf{m} \wedge i\neq j\},
\\[1ex]
L^{Mean}_{tc}(\mathbf{M}) &= \sum_{\mathbf{m}\in\mathbf{M}}\frac{|\mathbf{w}_1\cap\mathbf{m}|}{|\mathbf{w}_1|}\mean\{\dm(i,j) :i,j\in\mathbf{m} \wedge W_i\neq W_j\},
\\[1ex]
L^{Sum}_{tc}(\mathbf{M}) &= \sum_{\mathbf{m}\in\mathbf{M}}\sum\{\dm(i,j) :i,j\in\mathbf{m} \wedge W_i\neq W_j\}.
\end{align*}
$L^{Max}$ is the maximum within-group distance between any two units, and $L^{Max}_{tc}$ is the maximum distance between treated and control units.
They are the objectives discussed in Section \ref{sec:objective} and are the ones used by the \texttt{quickmatch} package.
$L^{Mean}_{tc}$ is the average within-group distance between treated and control units weighted by the number treated units in the groups.
It is the objective function discussed by \citet{Rosenbaum1991} when he introduced full matching.
As \citeauthor{Rosenbaum1991} notes, this objective is neutral in the sense that the size of the matched groups matters only insofar as it affects the within-group distances.
To contrast with $L^{Max}$, we include $L^{Mean}$: a version of the mean distance objective that also considers within-group distances between units assigned to the same treatment condition.

Finally, $L^{Sum}_{tc}$ is the sum of within-group distances between treated and control units.
With the terminology of \citet{Rosenbaum1991}, this function favors small subclasses and is, thus, not neutral.
As a consequence, if we were to use $L^{Sum}_{tc}$ as our objective, we would accept matchings with worse balance if the matched groups were sufficiently smaller.
When the matching structure is fixed (as with 1:1- and 1:k-matching without replacement), $L^{Sum}_{tc}$ is proportional to $L^{Mean}_{tc}$ and, thus, identical for practical purposes.
Both the \texttt{optmatch} and \texttt{Matching} packages use the sum as their objective.

Table \ref{tab:distances} presents the distance measures for the different methods.
As distances have no natural scale, we normalize the results by the results of conventional full matching in smaller sample.
We see that 1:1-matching with replacement greatly outperforms the other methods, especially on $L^{Sum}_{tc}$ which is the objective function it uses.
The implementations of both conventional and generalized full matching perform largely the same, with a slight advantage to \texttt{optmatch} on the $L^{Sum}_{tc}$ measure.
All versions of matching without replacement performs considerably worse than the other methods, in particular on the measures they do not use as their objective.
Predictively, the optimal implementations produce shorter distances than the greedy versions, but the differences are small.

Comparisons in aggregated distances between methods that impose different matching constraints can be awkward because the methods solve different types of matching problems.
For example, 1:2-matching will necessarily lead to larger distances than 1:1-matching, but the former can be preferable if, for example, we are interested in \textsc{att} and control units vastly outnumbers treated units.
Comparisons between methods using the same matching constraints should, however, be informative.

\begin{table}[ht]
	\centering
	\caption{Aggregated distances for matching methods with samples of 1,000 and 10,000 units.}\label{tab:distances}
	\resizebox{\textwidth}{!}{%
		\begin{tabular}{l c rrrrr c rrrrr}
			                & &                                                                           \multicolumn{5}{c}{\underline{1,000 units}}                                                                           & &                                                                           \multicolumn{5}{c}{\underline{10,000 units}}                                                                           \\
                & &        \multicolumn{1}{r}{$L^{Max}$} &  \multicolumn{1}{r}{$L^{Max}_{tc}$} &      \multicolumn{1}{r}{$L^{Mean}$} & \multicolumn{1}{r}{$L^{Mean}_{tc}$} &  \multicolumn{1}{r}{$L^{Sum}_{tc}$} & &        \multicolumn{1}{r}{$L^{Max}$} &  \multicolumn{1}{r}{$L^{Max}_{tc}$} &      \multicolumn{1}{r}{$L^{Mean}$} & \multicolumn{1}{r}{$L^{Mean}_{tc}$} &  \multicolumn{1}{r}{$L^{Sum}_{tc}$}  \\ \cline{3-7} \cline{9-13}
Greedy 1:1      & &                                 1.87 &                                 2.67 &                                 1.41 &                                 1.50 &                                 0.43 & &                                 2.20 &                                 3.14 &                                 0.89 &                                 0.95 &                                 2.69  \\
Optimal 1:1     & &                                 1.29 &                                 1.85 &                                 1.20 &                                 1.27 &                                 0.36 & &                                 1.87 &                                 2.68 &                                 0.80 &                                 0.85 &                                 2.41  \\
Replacement 1:1 & &                                 0.45 &                                 0.51 &                                 0.65 &                                 0.66 &                                 0.19 & &                                 0.19 &                                 0.20 &                                 0.20 &                                 0.21 &                                 0.59  \\
Greedy 1:2      & &                                 3.66 &                                 5.23 &                                 3.21 &                                 4.31 &                                 2.46 & &                                 3.99 &                                 5.71 &                                 2.51 &                                 3.69 &                                20.97  \\
Optimal 1:2     & &                                 3.27 &                                 4.68 &                                 3.17 &                                 3.79 &                                 2.17 & &                                 3.93 &                                 5.62 &                                 2.96 &                                 3.50 &                                19.87  \\
Full matching   & &                                 1.00 &                                 1.00 &                                 1.00 &                                 1.00 &                                 1.00 & &                                 0.39 &                                 0.38 &                                 0.31 &                                 0.31 &                                 3.10  \\
GFM             & &                                 1.00 &                                 1.00 &                                 0.99 &                                 0.98 &                                 1.05 & &                                 0.39 &                                 0.38 &                                 0.31 &                                 0.30 &                                 3.25  \\
Refined GFM     & &                                 0.95 &                                 1.25 &                                 0.98 &                                 1.10 &                                 1.19 & &                                 0.37 &                                 0.49 &                                 0.31 &                                 0.34 &                                 3.70  \\
 \hline
			\multicolumn{13}{p{0.9\textwidth}}{\scriptsize\emph{Notes:} The measures are normalized by the result for conventional full matching in the sample with 1,000 units. Results are based on 10,000 simulation rounds. Simulation errors are negligible.} \\
		\end{tabular}
	}
\end{table}

\subsection{Group structure}\label{sec:additional-group}

Table \ref{tab:group-structure} presents measures of the group structure for the different matching methods.
The first measure is the average size of the matched groups. 1:1- and 1:2-matching without replacement have a fixed group size of either two or three units.
The group size for matching with replacement depends on the sparseness of the control units. Overlap is reasonably good with the current data generating process, and the average group size increases with only 20\% compared to matching without replacement.
The full matching methods lead to larger groups since they do not discard units.
Given the unconditional propensity score of 26.5\%, the expected minimum group size among matchings that do not discard units is 3.77 units, which is close to what the methods produce.
The groups are slightly smaller with conventional full matching.
This is likely a result of both that implementation's optimality and its use of a non-neutral objective function (i.e., $L^{Sum}_{tc}$).
In the second column, we present the standard deviation of the group sizes. We see that the full matching methods have considerably higher variation.
This is a result of their ability to adapt the matching to the distribution of units in the covariate space.

Next, we investigate the share of the sample that is discarded.
For a given level of balance, we want to drop as few units as possible.
Predictably, 1:1-matching leads to that a sizable portion of the sample are left unassigned.
This is especially the case when we match with replacement.
Fewer units are discarded with 1:2-matching, and by construction, no units are discarded with the full matching methods.

The fourth column reports the standard deviation of the weights implicitly used for the adjustment in the estimator.
Weight variation is necessary to balance an unbalanced sample.
However, for a given level of balance, we want the weights to be as uniform as possible.
Since we are estimating \textsc{att}, the implied weights for treated units are fixed at $|\mathbf{w}_1|^{-1}$ for all methods.
Weights for control do, however, vary.
The implied weight for control unit $i$ assigned to matched group $\mathbf{m}$ is:
\begin{equation}
\text{wgh}_i = \frac{|\mathbf{w}_1\cap\mathbf{m}|}{|\mathbf{w}_1| \times |\mathbf{w}_0\cap\mathbf{m}|},
\end{equation}
and zero if not assigned to a group.
Examining the results, we see that the amount of variation is correlated with how well the methods are able to minimize distances.
For example, 1:1-matching with replacement produces the shortest distances, but as a result, also the most weight variation.
The choice of method depends on how one resolves the trade-off between weight variation and balance, which, in turn, depends on how strongly the covariates are correlated with the outcome and treatment assignment.
For this reason, the best choice of matching method will differ depending on the data generating process.
It appears, however, that all full matching methods lead to matchings with substantially smaller distances than 1:1-matching without replacement with only slightly higher weight variation (i.e., close to a Pareto improvement).
Similarly, but less pronounced, the \texttt{optmatch} package dominates the \texttt{quickmatch} package; the former produces about the same distances but with less weight variation.

\begin{table}[ht]
\centering
\caption{Group composition for matching methods with samples of 1,000 and 10,000 units.}\label{tab:group-structure}
\resizebox{\textwidth}{!}{%
	\begin{tabular}{l c rrrr c rrrr}
		                & &                                                                        \multicolumn{4}{c}{\underline{1,000 units}}                                                                       & &                                                                       \multicolumn{4}{c}{\underline{10,000 units}}                                                                        \\
                & &                     \multicolumn{1}{r}{Size} & \multicolumn{1}{r}{$\sigma(\text{Size})$} &                \multicolumn{1}{r}{\% drop} &  \multicolumn{1}{r}{$\sigma(\text{wgh})$} & &                     \multicolumn{1}{r}{Size} & \multicolumn{1}{r}{$\sigma(\text{Size})$} &                \multicolumn{1}{r}{\% drop} &  \multicolumn{1}{r}{$\sigma(\text{wgh})$}  \\ \cline{3-6} \cline{8-11}
Greedy 1:1      & &                                         2.00 &                                         0.00 &                                        46.96 &                                         1.81 & &                                         2.00 &                                         0.00 &                                        47.03 &                                         1.81  \\
Optimal 1:1     & &                                         2.00 &                                         0.00 &                                        46.96 &                                         1.81 & &                                         2.00 &                                         0.00 &                                        47.03 &                                         1.81  \\
Replacement 1:1 & &                                         2.41 &                                         0.86 &                                        54.70 &                                         2.85 & &                                         2.41 &                                         0.87 &                                        54.73 &                                         2.85  \\
Greedy 1:2      & &                                         3.00 &                                         0.00 &                                        20.44 &                                         0.84 & &                                         3.00 &                                         0.00 &                                        20.54 &                                         0.85  \\
Optimal 1:2     & &                                         3.00 &                                         0.00 &                                        20.44 &                                         0.84 & &                                         3.00 &                                         0.00 &                                        20.54 &                                         0.85  \\
Full matching   & &                                         4.24 &                                         3.50 &                                         0.00 &                                         1.93 & &                                         4.24 &                                         3.51 &                                         0.00 &                                         1.97  \\
GFM             & &                                         4.74 &                                         3.54 &                                         0.00 &                                         2.13 & &                                         4.73 &                                         3.55 &                                         0.00 &                                         2.15  \\
Refined GFM     & &                                         4.55 &                                         3.26 &                                         0.00 &                                         2.04 & &                                         4.54 &                                         3.30 &                                         0.00 &                                         2.07  \\
 \hline
		\multicolumn{11}{p{0.9\textwidth}}{\scriptsize\emph{Notes:} The columns report the average group size, the standard deviation of the size, share of units not assigned to a group and the standard deviation in the weights of the control units implied by the matchings. Results are based on 10,000 simulation rounds. Simulation errors are negligible.} \\
	\end{tabular}
}
\end{table}

\clearpage

\subsection{Complexity and matching quality}\label{sec:additional-main}
.
\begin{table}[ht]
\centering
\caption{Covariate balance for matching methods with samples of 1,000 and 10,000 units.}
\resizebox{\textwidth}{!}{%
	\begin{tabular}{l c rrrrr c rrrrr}
		                & &                                                          \multicolumn{5}{c}{\underline{1,000 units}}                                                         & &                                                         \multicolumn{5}{c}{\underline{10,000 units}}                                                          \\
                & &     \multicolumn{1}{r}{$X_1$} &    \multicolumn{1}{r}{$X_2$} &  \multicolumn{1}{r}{$X_1^2$} &  \multicolumn{1}{r}{$X_2^2$} & \multicolumn{1}{r}{$X_1X_2$} & &     \multicolumn{1}{r}{$X_1$} &    \multicolumn{1}{r}{$X_2$} &  \multicolumn{1}{r}{$X_1^2$} &  \multicolumn{1}{r}{$X_2^2$} & \multicolumn{1}{r}{$X_1X_2$}  \\ \cline{3-7} \cline{9-13}
Unadjusted      & &                         52.48 &                         52.73 &                         10.72 &                         10.91 &                         13.11 & &                        52.528 &                        52.735 &                        10.707 &                        10.874 &                        12.588  \\
Greedy 1:1      & &                          5.93 &                          5.94 &                          7.21 &                          7.31 &                         13.87 & &                         5.270 &                         5.286 &                         6.647 &                         6.756 &                        13.332  \\
Optimal 1:1     & &                          5.94 &                          5.95 &                          7.08 &                          7.19 &                         14.09 & &                         5.260 &                         5.279 &                         6.594 &                         6.704 &                        13.396  \\
Replacement 1:1 & &                          0.44 &                          0.44 &                          0.76 &                          0.76 &                          0.80 & &                         0.043 &                         0.043 &                         0.077 &                         0.079 &                         0.077  \\
Greedy 1:2      & &                         26.23 &                         26.39 &                         15.48 &                         15.76 &                         35.59 & &                        25.662 &                        25.741 &                        15.410 &                        15.667 &                        36.914  \\
Optimal 1:2     & &                         26.18 &                         26.34 &                         15.22 &                         15.48 &                         36.11 & &                        25.668 &                        25.759 &                        15.495 &                        15.749 &                        36.745  \\
Full matching   & &                          1.00 &                          1.00 &                          1.00 &                          1.00 &                          1.00 & &                         0.105 &                         0.105 &                         0.106 &                         0.108 &                         0.096  \\
GFM             & &                          0.74 &                          0.75 &                          0.77 &                          0.77 &                          0.80 & &                         0.074 &                         0.075 &                         0.075 &                         0.077 &                         0.073  \\
Refined GFM     & &                          1.04 &                          1.05 &                          0.99 &                          0.99 &                          1.08 & &                         0.108 &                         0.108 &                         0.103 &                         0.104 &                         0.105  \\
 \hline
		\multicolumn{13}{p{0.95\textwidth}}{\scriptsize\emph{Notes:} The measures are normalized by the result for conventional full matching in the sample with 1,000 units.} \\
	\end{tabular}
}
\end{table}

\begin{table}[ht]
\centering
\caption{Estimator performance for matching methods with samples of 1,000 and 10,000 units.}
\resizebox{0.9\textwidth}{!}{%
	\begin{tabular}{l c rrrr c rrrr}
		                & &                                                                                                                      \multicolumn{4}{c}{\underline{1,000 units}}                                                                                                                     & &                                                                                                                     \multicolumn{4}{c}{\underline{10,000 units}}                                                                                                                      \\
                & &                                            \multicolumn{1}{r}{Bias} &                                   \multicolumn{1}{r}{\textsc{se}} &                                 \multicolumn{1}{r}{\textsc{rmse}} & \multicolumn{1}{r}{$\frac{\text{Bias}}{\text{\textsc{rmse}}}$} & &                                            \multicolumn{1}{r}{Bias} &                                   \multicolumn{1}{r}{\textsc{se}} &                                 \multicolumn{1}{r}{\textsc{rmse}} & \multicolumn{1}{r}{$\frac{\text{Bias}}{\text{\textsc{rmse}}}$}  \\ \cline{3-6} \cline{8-11}
Unadjusted      & &                                                               83.34 &                                                                1.47 &                                                               12.70 &                                                               0.993 & &                                                              83.390 &                                                                0.47 &                                                               12.64 &                                                               0.999  \\
Greedy 1:1      & &                                                                4.86 &                                                                1.04 &                                                                1.26 &                                                               0.583 & &                                                               4.118 &                                                                0.33 &                                                                0.71 &                                                               0.884  \\
Optimal 1:1     & &                                                                4.96 &                                                                1.04 &                                                                1.27 &                                                               0.590 & &                                                               4.153 &                                                                0.33 &                                                                0.71 &                                                               0.885  \\
Replacement 1:1 & &                                                                0.11 &                                                                1.17 &                                                                1.16 &                                                               0.015 & &                                                               0.024 &                                                                0.38 &                                                                0.37 &                                                               0.010  \\
Greedy 1:2      & &                                                               33.97 &                                                                1.61 &                                                                5.38 &                                                               0.956 & &                                                              32.980 &                                                                0.52 &                                                                5.02 &                                                               0.995  \\
Optimal 1:2     & &                                                               34.08 &                                                                1.59 &                                                                5.39 &                                                               0.957 & &                                                              32.939 &                                                                0.51 &                                                                5.01 &                                                               0.995  \\
Full matching   & &                                                                1.00 &                                                                1.00 &                                                                1.00 &                                                               0.151 & &                                                               0.077 &                                                                0.32 &                                                                0.32 &                                                               0.037  \\
GFM             & &                                                                0.77 &                                                                1.03 &                                                                1.03 &                                                               0.113 & &                                                               0.043 &                                                                0.33 &                                                                0.33 &                                                               0.020  \\
Refined GFM     & &                                                                1.20 &                                                                1.02 &                                                                1.02 &                                                               0.178 & &                                                               0.091 &                                                                0.32 &                                                                0.32 &                                                               0.043  \\
 \hline
		\multicolumn{11}{p{0.8\textwidth}}{\scriptsize\emph{Notes:} The first three measures in each panel are normalized by the result for conventional full matching.} \\
	\end{tabular}
}
\end{table}

\begin{sidewaystable}[ht]
	\centering
	\caption{Runtime and memory use by sample size for matching methods.} \label{tab:complexity}
	\vspace{0.15in}
	\resizebox{0.8\textwidth}{!}{%
		\begin{tabular}{l c rrrrr rrrrr rrrrr}
			\multicolumn{17}{l}{\textbf{Panel A: Runtime (in minutes)}} \\
			& &   100 &   500 &    1K &    5K &   10K &   20K &   50K &  100K &  200K &  500K &    1M &    5M &   10M &   50M &  100M \\ \hline
			\\
			Greedy 1:1      & & 0.01 & 0.01 & 0.01 & 0.01 &  0.03 &  0.09 & 0.62 & 2.71 & 12.85 &      &      &      &      &      &       \\
Optimal 1:1     & & 0.06 & 0.06 & 0.08 & 1.08 &  5.03 & 19.40 &      &      &       &      &      &      &      &      &       \\
Replacement 1:1 & & 0.01 & 0.01 & 0.01 & 0.01 &  0.03 &  0.09 & 0.58 & 2.60 & 12.15 &      &      &      &      &      &       \\
Greedy 1:2      & & 0.01 & 0.01 & 0.01 & 0.01 &  0.03 &  0.09 & 0.63 & 2.76 & 13.20 &      &      &      &      &      &       \\
Optimal 1:2     & & 0.06 & 0.06 & 0.10 & 5.02 & 26.19 &       &      &      &       &      &      &      &      &      &       \\
Full matching   & & 0.06 & 0.06 & 0.08 & 0.30 &  0.87 &  2.84 &      &      &       &      &      &      &      &      &       \\
GFM             & & 0.00 & 0.00 & 0.00 & 0.00 &  0.01 &  0.00 & 0.01 & 0.01 &  0.02 & 0.06 & 0.11 & 0.64 & 1.05 & 6.49 & 14.15 \\
Refined GFM     & & 0.00 & 0.00 & 0.00 & 0.01 &  0.01 &  0.01 & 0.01 & 0.02 &  0.03 & 0.08 & 0.16 & 0.96 & 1.52 & 9.31 & 20.07 \\

			\\ \hline
			& \hspace{0.03in} & & & & & & \hspace{0.2in} & & & & \\
			\\
			\multicolumn{17}{l}{\textbf{Panel A: Memory use (in gigabytes)}}  \\
			& &   100 &   500 &    1K &    5K &   10K &   20K &   50K &  100K &  200K &  500K &    1M &    5M &   10M &   50M &  100M \\ \hline
			\\
			Greedy 1:1      & & 0.03 & 0.03 & 0.03 & 0.04 & 0.04 &  0.04 & 0.05 & 0.06 & 0.09 &      &      &      &      &      &       \\
Optimal 1:1     & & 0.12 & 0.13 & 0.15 & 0.74 & 2.75 & 10.21 &      &      &      &      &      &      &      &      &       \\
Replacement 1:1 & & 0.03 & 0.03 & 0.03 & 0.03 & 0.04 &  0.04 & 0.05 & 0.06 & 0.10 &      &      &      &      &      &       \\
Greedy 1:2      & & 0.03 & 0.03 & 0.03 & 0.04 & 0.04 &  0.04 & 0.05 & 0.07 & 0.11 &      &      &      &      &      &       \\
Optimal 1:2     & & 0.12 & 0.13 & 0.15 & 0.74 & 2.74 &       &      &      &      &      &      &      &      &      &       \\
Full matching   & & 0.12 & 0.13 & 0.16 & 0.74 & 2.74 & 10.19 &      &      &      &      &      &      &      &      &       \\
GFM             & & 0.03 & 0.03 & 0.03 & 0.03 & 0.03 &  0.03 & 0.04 & 0.05 & 0.06 & 0.12 & 0.21 & 0.88 & 1.73 & 8.57 & 17.11 \\
Refined GFM     & & 0.03 & 0.03 & 0.03 & 0.03 & 0.03 &  0.03 & 0.04 & 0.05 & 0.06 & 0.12 & 0.21 & 0.88 & 1.73 & 8.57 & 17.11 \\

			\\ \hline
			\multicolumn{17}{p{\textwidth}}{\footnotesize\emph{Notes:} Each cell presents the runtime and memory of the matching implementations for different sample sizes. The first panel shows runtime in minutes, and the second panel shows memory use in gigabytes. Each column represents a different sample size where ``K'' denotes thousand and ``M'' denotes million. The rows indicate matching method. Greedy 1:1, Replacement 1:1 and Greedy 1:2 are implemented by the \texttt{Matching} package. Optimal 1:1, Optimal 1:2 and conventional full matching are implemented by the \texttt{optmatch} package. Generalized full matching (GFM) is implemented by the \texttt{scclust} package. Blank cells indicate that the corresponding matching method did not terminate successfully for the corresponding sample size within reasonable time and memory limits. Each measure is based on 1,000 simulation rounds.} \\
		\end{tabular}
	}
\end{sidewaystable}

\clearpage

\section{Additional extrapolation results}

The following tables provide unadjusted and adjusted covariate averages for all treatment conditions in the full population and in the subpopulation of voters in the 2004 partisan election.

\begin{table}[ht]
	\centering
	\caption{Covariate balance before and after matching adjustment in full QVF population.}
	\resizebox{0.9\textwidth}{!}{%
	\begin{tabular}{l c rrrrrr}
		\multicolumn{8}{l}{\textbf{Panel A: Covariate balance before matching}} \\
		& & \multicolumn{1}{r}{Control} & \multicolumn{1}{r}{Civic Duty} & \multicolumn{1}{r}{Hawthorne} & \multicolumn{1}{r}{Self} & \multicolumn{1}{r}{Neighbors} & \multicolumn{1}{r}{Non-experiment} \\ \hline
		Birth year &  & 1956.19 & 1956.34 & 1956.30 & 1956.21 & 1956.15 & 1957.96 \\
Female (\%) &  &   49.89 &   50.02 &   49.90 &   49.96 &   50.00 &   53.32 \\
Voted Aug 2000 (\%) &  &   25.19 &   25.36 &   25.04 &   25.11 &   25.12 &   14.65 \\
Voted Aug 2002 (\%) &  &   38.94 &   38.88 &   39.43 &   39.19 &   38.66 &   22.59 \\
Voted Aug 2004 (\%) &  &   40.03 &   39.94 &   40.32 &   40.25 &   40.67 &   18.71 \\
Voted Nov 2000 (\%) &  &   84.34 &   84.17 &   84.44 &   84.04 &   84.17 &   52.49 \\
Voted Nov 2002 (\%) &  &   81.09 &   81.11 &   81.30 &   81.15 &   81.13 &   41.93 \\
Voted Nov 2004 (\%) &  &  100.00 &  100.00 &  100.00 &  100.00 &  100.00 &   67.57 \\
 \hline
		\\
		\\
		\multicolumn{8}{l}{\textbf{Panel B: Covariate balance after matching}} \\
		& & \multicolumn{1}{r}{Control} & \multicolumn{1}{r}{Civic Duty} & \multicolumn{1}{r}{Hawthorne} & \multicolumn{1}{r}{Self} & \multicolumn{1}{r}{Neighbors} & \multicolumn{1}{r}{Non-experiment} \\ \hline
		Birth year &  & 1958.16 & 1958.49 & 1958.44 & 1958.57 & 1958.51 & 1957.87 \\
Female (\%) &  &   53.29 &   53.28 &   53.28 &   53.29 &   53.28 &   53.15 \\
Voted Aug 2000 (\%) &  &   15.19 &   15.19 &   15.19 &   15.19 &   15.19 &   15.19 \\
Voted Aug 2002 (\%) &  &   23.42 &   23.42 &   23.42 &   23.42 &   23.42 &   23.43 \\
Voted Aug 2004 (\%) &  &   19.80 &   19.80 &   19.80 &   19.80 &   19.80 &   19.80 \\
Voted Nov 2000 (\%) &  &   54.11 &   54.14 &   54.13 &   54.13 &   54.13 &   54.11 \\
Voted Nov 2002 (\%) &  &   43.94 &   43.94 &   43.94 &   43.94 &   43.94 &   43.92 \\
Voted Nov 2004 (\%) &  &  100.00 &  100.00 &  100.00 &  100.00 &  100.00 &   68.76 \\
 \hline
	\end{tabular}
	}
\end{table}

\begin{table}[ht]
	\centering
	\caption{Covariate balance before and after adjustment among voters in 2004 partisan election.}
	\resizebox{0.9\textwidth}{!}{%
	\begin{tabular}{l c rrrrrr}
		\multicolumn{8}{l}{\textbf{Panel A: Covariate balance before matching}} \\
		& & \multicolumn{1}{r}{Control} & \multicolumn{1}{r}{Civic Duty} & \multicolumn{1}{r}{Hawthorne} & \multicolumn{1}{r}{Self} & \multicolumn{1}{r}{Neighbors} & \multicolumn{1}{r}{Non-experiment} \\ \hline
		Birth year &  & 1956.19 & 1956.34 & 1956.30 & 1956.21 & 1956.15 & 1955.71 \\
Female (\%) &  &   49.89 &   50.02 &   49.90 &   49.96 &   50.00 &   54.51 \\
Voted Aug 2000 (\%) &  &   25.19 &   25.36 &   25.04 &   25.11 &   25.12 &   20.49 \\
Voted Aug 2002 (\%) &  &   38.94 &   38.88 &   39.43 &   39.19 &   38.66 &   31.94 \\
Voted Aug 2004 (\%) &  &   40.03 &   39.94 &   40.32 &   40.25 &   40.67 &   26.94 \\
Voted Nov 2000 (\%) &  &   84.34 &   84.17 &   84.44 &   84.04 &   84.17 &   70.16 \\
Voted Nov 2002 (\%) &  &   81.09 &   81.11 &   81.30 &   81.15 &   81.13 &   58.81 \\
Voted Nov 2004 (\%) &  &  100.00 &  100.00 &  100.00 &  100.00 &  100.00 &  100.00 \\
 \hline
		\\
		\\
		\multicolumn{8}{l}{\textbf{Panel B: Covariate balance after matching}} \\
		& & \multicolumn{1}{r}{Control} & \multicolumn{1}{r}{Civic Duty} & \multicolumn{1}{r}{Hawthorne} & \multicolumn{1}{r}{Self} & \multicolumn{1}{r}{Neighbors} & \multicolumn{1}{r}{Non-experiment} \\ \hline
		Birth year &  & 1955.90 & 1956.11 & 1956.25 & 1956.26 & 1956.20 & 1955.74 \\
Female (\%) &  &   54.17 &   54.16 &   54.17 &   54.17 &   54.16 &   54.17 \\
Voted Aug 2000 (\%) &  &   20.83 &   20.83 &   20.81 &   20.83 &   20.83 &   20.83 \\
Voted Aug 2002 (\%) &  &   32.46 &   32.46 &   32.46 &   32.46 &   32.46 &   32.46 \\
Voted Aug 2004 (\%) &  &   27.92 &   27.92 &   27.92 &   27.92 &   27.92 &   27.92 \\
Voted Nov 2000 (\%) &  &   71.20 &   71.22 &   71.21 &   71.20 &   71.21 &   71.20 \\
Voted Nov 2002 (\%) &  &   60.44 &   60.45 &   60.45 &   60.45 &   60.45 &   60.45 \\
Voted Nov 2004 (\%) &  &  100.00 &  100.00 &  100.00 &  100.00 &  100.00 &  100.00 \\
 \hline
	\end{tabular}
	}
\end{table}

\clearpage

\end{document}